\newcommand{\longversion}[1]{}
\newcommand{\shortversion}[1]{#1}
\DeclareMathOperator*{\argmin}{arg\,min}
\newdimen\prevdp
\def\leftlabel#1{\noalign{\prevdp=\prevdepth
   \kern-\prevdp\nointerlineskip\vbox to0pt{\vss\hbox{\ensuremath{#1}}}\kern\prevdp}}
\newcommand{\NP}{\ensuremath{\mathsf{NP}}\xspace}
\newcommand{\NPC}{\ensuremath{\mathsf{NP}}\text{-complete}\xspace}
\newcommand{\NPH}{\ensuremath{\mathsf{NP}}\text{-hard}\xspace}
\newcommand{\PNPH}{para-\ensuremath{\mathsf{NP}\text{-hard}}\xspace}
\newcommand{\el}{\ensuremath{\ell}\xspace}
\newcommand{\WOH}{\ensuremath{\mathsf{W[1]}}-hard\xspace}
\newcommand{\FPT}{\ensuremath{\mathsf{FPT}}\xspace}
\newcommand{\ETH}{\ensuremath{\mathsf{ETH}}\xspace}
\newcommand{\OPT}{\ensuremath{\mathsf{OPT}}\xspace}
\newcommand{\Pb}{\ensuremath{\mathsf{P}}\xspace}
\newcommand{\tw}{\text{tw}\xspace}
\let\oldlambda\lambda
\renewcommand{\lambda}{\ensuremath{\oldlambda}\xspace}
\let\oldalpha\alpha
\renewcommand{\alpha}{\ensuremath{\oldalpha}\xspace}
\let\oldDelta\Delta
\renewcommand{\Delta}{\ensuremath{\oldDelta}\xspace}
\newcommand{\yes}{{\sc yes}\xspace}
\newcommand{\no}{{\sc no}\xspace}
\newcommand{\true}{\text{{\sc true}}\xspace}
\newcommand{\false}{\text{{\sc false}}\xspace}
\newcommand{\pa}{{\sc Connected Knapsack}\xspace}
\newcommand{\conknapsack}{\pa}
\newcommand{\pathknapsack}{{\sc Path Knapsack}\xspace}
\newcommand{\shortestpathknapsack}{{\sc Shortest Path Knapsack}\xspace}
\newcommand{\EE}{\ensuremath{\mathcal E}\xspace}
\newcommand{\GG}{\ensuremath{\mathcal G}\xspace}
\newcommand{\II}{\ensuremath{\mathcal I}\xspace}
\newcommand{\OO}{\ensuremath{\mathcal O}\xspace}
\newcommand{\PP}{\ensuremath{\mathcal P}\xspace}
\newcommand{\QQ}{\ensuremath{\mathcal Q}\xspace}
\newcommand{\RR}{\ensuremath{\mathcal R}\xspace}
\newcommand{\UU}{\ensuremath{\mathcal U}\xspace}
\newcommand{\VV}{\ensuremath{\mathcal V}\xspace}
\newcommand{\WW}{\ensuremath{\mathcal W}\xspace}
\newcommand{\XX}{\ensuremath{\mathcal X}\xspace}
\newcommand{\NB}{\ensuremath{\mathbb N}\xspace}
\newcommand{\RB}{\ensuremath{\mathbb R^+}\xspace}
\newtheorem{observation}{\bf Observation}
\newcommand{\eps}{\ensuremath{\varepsilon}\xspace}
\renewcommand{\epsilon}{\eps}
\newcommand{\ignore}[1]{}
\newcommand{\pr}{\ensuremath{\prime}}
\newcommand{\prr}{\ensuremath{{\prime\prime}}}
\renewcommand{\leq}{\leqslant}
\renewcommand{\geq}{\geqslant}
\renewcommand{\ge}{\geqslant}
\renewcommand{\le}{\leqslant}
\setlist[enumerate]{labelwidth=!, labelindent=0pt}
\newcommand{\kp}{\text{\sc Knapsack}\xspace}
\newcommand{\vc}{\text{\sc Vertex Cover}\xspace}
\newcommand{\pvc}{\text{\sc Partial Vertex Cover}\xspace}
\newcommand{\hp}{\text{\sc Hamiltonian Path}\xspace}
\algnewcommand\algorithmicinput{\textbf{Input:}}
\algnewcommand\INPUT{\item[\algorithmicinput]}
\algnewcommand\algorithmicoutput{\textbf{Output:}}
\algnewcommand\OUTPUT{\item[\algorithmicoutput]}
\algnewcommand{\LineComment}[1]{\State \(\triangleright\) #1}
\renewenvironment{proof}{\paragraph{Proof:}}{\hfill$\square$}
\crefname{theorem}{Theorem}{\bf Theorems}
\crefname{observation}{Observation}{\bf Observations}
\crefname{lemma}{Lemma}{\bf Lemmata}
\crefname{corollary}{Corollary}{\bf Corollaries}
\crefname{proposition}{Proposition}{\bf Propositions}
\crefname{definition}{Definition}{\bf Definitions}
\crefname{claim}{Claim}{\bf Claims}
\crefname{reductionrule}{Reduction rule}{\bf Reduction rules}
\begin{document}

\title{Knapsack: Connectedness, Path, and Shortest-Path}
%
%
\author{Palash Dey\inst{1} \and
Sudeshna Kolay\inst{2} \and
Sipra Singh\inst{3}}
%
\authorrunning{Dey et al.}
%
\institute{Indian Institute of Technology Kharagpur
\email{\{palash.dey\inst{1},skolay\inst{2}\}@cse.iitkgp.ac.in,sipra.singh@iitkgp.ac.in\inst{3}}}
%


\maketitle              

\begin{abstract}
We study the \kp problem with graph-theoretic constraints. That is, there exists a graph structure on the input set of items of \kp and the solution also needs to satisfy certain graph theoretic properties on top of the \kp constraints. In particular, we study \pa where the solution must be a connected subset of items which has maximum value and satisfies the size constraint of the knapsack. We show that this problem is strongly \NPC even for graphs of maximum degree four and \NPC even for star graphs. On the other hand, we develop an algorithm running in time $\OO\left(2^{\tw\log\tw}\cdot\text{poly}(n)\min\{s^2,d^2\}\right)$  where $\tw,s,d,n$ are respectively treewidth of the graph, the size of the knapsack, the target value of the knapsack, and the number of items. We also exhibit a $(1-\eps)$ factor approximation algorithm running in time $\OO\left(2^{\tw\log\tw}\cdot\text{poly}(n,1/\eps)\right)$ for every $\eps>0$. We show similar results for \pathknapsack and \shortestpathknapsack, where the solution must also induce a path and shortest path, respectively. Our results suggest that \pa is computationally the hardest, followed by \pathknapsack and then \shortestpathknapsack.
\keywords{Knapsack  \and Graph Algorithms \and Parameterised Complexity \and  Approximation algorithm.}

\end{abstract}

\section{Introduction}

The \kp problem is one of the most well-studied problems in computer science~\cite{CacchianiILM22,cacchiani2022knapsack,kellerer2004multidimensional,martello1990knapsack}. Here, we are given a set of $n$ items with corresponding sizes $w_1,\ldots,w_n$ and values $\alpha_1,\ldots,\alpha_n$, the size (aka budget) $s$ of the bag, and the goal is to compute a subset of items whose total value is as high as possible and total size is at most $s$. 

Often there exist natural graph structures on the set of items in the \kp problem, and we want the solution to satisfy some graph theoretic constraint also. For example, suppose a company wants to lease a set of mobile towers, each of which comes at a specific cost with a specific number of customers. The company has a budget and wants to serve the maximum number of customers. However, every mobile tower is in the range of some but not all the other towers. The company wants to provide fast connection between every pair of its customers, which is possible only if they are connected via other intermediary towers of the company. Here we have a natural graph structure --- the vertices are the towers, and we have an edge between two vertices if they are in the range of one another. In this situation, the company wants to lease a connected subset of towers which has the maximum total number of customers (to maximize its earning) subject to its budget constraint. We call this problem \pa.

Now, let us consider another application scenario where there is a railway network and a company wants to lease a railway path between station A and station B, maybe via other stations, to operate train services between them. Suppose the cost model puts a price tag on every station, and the links between two stations are complementary if the company owns both the stations. Each station also allows the owner to earn revenues from advertisement, etc. In this scenario, the company would like to lease from the entire railway network a path between stations A and B whose total revenues are as high as possible, subject to a budget constraint. We call this problem \pathknapsack. If the company's primary goal is to provide the fastest connectivity between A and B, then the company wants to lease a shortest path between A and B, whose total revenues are as high as possible, subject to a budget constraint. We study this problem also under the name \shortestpathknapsack. The formal definitions of the above problems are in Section~\ref{sec:prelim}.

\subsection{Contribution}

We study the computational complexity of \pa, \pathknapsack, and \shortestpathknapsack under the lens of parameterized complexity as well as approximation algorithms. We consider the treewidth of the underlying graph and the vertex cover size of the solution as our parameters. We note that both the parameters are well known to be small for many graphs appearing in various real-world applications. We summarize our results in \Cref{tbl:summary}.

\begin{table}
	\begin{center}
		\begin{tabular}{|c|c|}\hline
  
		\scriptsize \conknapsack & \makecell{\scriptsize Strongly \NPC even if max degree is $4$ [\Cref{thm:pa-gen-npc}]\\\scriptsize \NPC even for stars [\Cref{thm:pa-star-npc}]\\\scriptsize $2^{\OO(\tw\log\tw)}\cdot\text{poly}(n)\min\{s^2,d^2\}^{\dagger^\star}$ [\Cref{thm:treewidth-pa}]\\\scriptsize $2^{\OO(\tw\log\tw)}\cdot\text{poly}(n,1/\eps)^\dagger$ time, $(1-\eps)$ approximation [\Cref{thm-fptas}]\\\scriptsize No $\OO(f(vcs).\text{poly}(n,s,d)$ algorithm unless \ETH fails$^\ddagger$ 
[\Cref{thm:vcs-woh}].} \\\hline

		\scriptsize \pathknapsack & \makecell{\scriptsize Strongly \NPC even if max degree is $3$ [\Cref{thm:path-gen-npc}]\\\scriptsize  \NPC for graphs with pathwidth 2 [\Cref{thm:pk-pathwidth}]\\\scriptsize Polynomial-time algorithm for trees [\Cref{obs:pk-tree-poly}]\\\scriptsize $2^{\OO(\tw\log\tw)}\cdot\text{poly}(\min\{s^2,d^2\})^{\dagger^\star}$ [\Cref{thm:treewidth-path}]\\\scriptsize $2^{\OO(\tw\log\tw)}\cdot\text{poly}(n,1/\eps)^\dagger$ time, $(1-\eps)$ approximation [\Cref{thm-fptas-path}]\\\scriptsize $\OO\left((2e)^{2vcs} vcs^{\OO(\log vcs)}n^{\OO(1)}\right)^\ddagger$ time algorithm [\Cref{cor:vcs-path}].} \\\hline
   
			 \scriptsize \shortestpathknapsack & \makecell{\scriptsize \NPC for graphs with pathwidth 2 [\Cref{cor:shortest-pathwidth}]\\\scriptsize $\OO((m+n\log n)\cdot\min\{s^2,(\alpha(\VV))^2\})$ [\Cref{thm:shortest-algo}]\\ \scriptsize Polynomial-time algorithm for trees [\Cref{obs:shortest-tree-poly}]\\\scriptsize $(1-\eps)$ approximation in $\text{poly}(n,1/\eps)$ time [\Cref{thm-fptas-shortest-path}].}\\\hline
    
    
		\end{tabular}
  \caption{Summary of results. $\dagger: \tw$ is the treewidth of the input graph; $\star: s$ and $d$ are respectively target size and target profit; $\ddagger: vcs$ is the size of the minimum vertex cover of the subgraph induced by the solution; $\star\star: k$ is the number of vertices in the solution; $\alpha(\VV)=\sum_{v\in\VV}\alpha(v)$ where $\alpha(v)$ is the value of the vertex $v$ in \VV; $m$ is the number of edges in the input graph.}	
  \label{tbl:summary}
	\end{center}
\end{table}

We observe that all our problems admit fixed-parameter pseudo-polynomial time algorithms with respect to the treewidth of the graph. Further, all the problems admit an FPTAS for graphs with small (at most $\OO(\log n)$) treewidth. Our results seem to indicate that \shortestpathknapsack is computationally the easiest, followed by \pathknapsack and \pa.

\subsection{Related Work}
To the best of our knowledge, Yamada et al.~\cite{yamada2002heuristic} initiated the study of the \kp problem with a graph structure on the items (which they called conflict graph), where the goal is to compute an independent set which has the maximum total value and satisfies the budget constraint. The paper proposes a set of heuristics and upper bounds based on Lagrangean relaxation. Later, Hifi and Michrafy~\cite{hifi2006reactive,hifi2007reduction} designed a meta-heuristic approach using reactive local search techniques for the problem. Pferschy and Schauer~\cite{PferschyS09} showed that this problem is strongly \NPH and presented FPTAS for some special graph classes. Bettinelli et al.~\cite{BettinelliCM17} proposed a dynamic programming based pruning in a branching based algorithm using the upper bounds proposed by Held et al.~\cite{HeldCS12}. Coniglio et al.~\cite{ConiglioFS21} presented a branch-and-bound type algorithm for the problem. Finally, Luiz et al.~\cite{LuizSU21} proposed a cutting plane based algorithm. Pferschy and Schauer~\cite{PferschyS17}, Gurski and Rehs~\cite{GurskiR19}, and Goebbels et al.~\cite{GoebbelsGK22} showed approximation algorithms for the \kp problem with more sophisticated neighborhood constraints. This problem (and also the variant where the solution should be a clique instead of an independent set) also admits an algorithm running in time $n^{\OO(k)}$ where $k$ is the thinness of the input graph~\cite{DBLP:journals/dam/BonomoE19,DBLP:journals/orl/ManninoORC07}.

Ito et al.~\cite{ito2008approximability} studied FPTASes for a generalized version of our problems, called the {\sc Maximum partition} problem but on a specialized graph class, namely the class of series-parallel graphs. Although their work mentions that their results can be extended to bounded treewidth graphs, the authors state that the algorithmic techniques for that is more complex but do not explain the techniques. To the best of our knowledge, there is no follow-up work where the techniques for bounded treewidth graphs are explained.

Bonomo-Braberman and Gonzalez~\cite{DBLP:journals/dam/Bonomo-Braberman22b} studied a general framework for, which they called "locally checkable problems", which can be used to obtain FPT algorithms for all our problems parameterized by treewidth. However, the running time of these algorithms obtained by their framework is worse than our algorithms parameterized by treewidth.

There are many other generalizations of the basic \kp problem which have been studied extensively in the literature. We refer to \cite{CacchianiILM22,cacchiani2022knapsack,kellerer2004multidimensional,martello1990knapsack} for an overview.

\section{Preliminaries}~\label{sec:prelim}

We denote the set $\{0,1,2,\ldots\}$ of natural numbers with \NB. For any integer \el, we denote the sets $\{1,\ldots,\el\}$ and $\{0,1,\ldots,\el\}$ by $[\el]$ and $[\el]_0$ respectively. Given a graph $G = (V,E)$ the distance between two vertices $u,v \in V$ is denoted by ${\sf dist}_G(u,v)$. We now formally define our problems. \longversion{Our first problem is \pa where the knapsack subset of items must induce a connected subgraph.} We frame all our problems as decision problems so that we can directly use the framework of \NP-completeness.

\begin{definition}[\conknapsack]\label{def:pa}
Given an undirected graph $\GG=(\VV,\EE)$, non-negative weights of vertices $(w(u))_{u\in\VV}$, non-negative values of vertices $(\alpha(u))_{u\in\VV}$, size $s$ of the knapsack, and target value $d$, compute if there exists a subset $\UU\subseteq\VV$ of vertices such that:\shortversion{ (i) \UU is connected, (ii) $w(\UU)=\sum_{u\in\UU} w(u) \le s$, and (iii) $\alpha(\UU)=\sum_{u\in\UU} \alpha(u) \ge d$} 
\longversion{\begin{enumerate}
    \item \UU is connected,
    \item $w(\UU)=\sum_{u\in\UU} w(u) \le s$,
    \item $\alpha(\UU)=\sum_{u\in\UU} \alpha(u) \ge d$.
\end{enumerate}}
We denote an arbitrary instance of \pa by $(\GG,(w(u))_{u\in\VV},(\alpha(u))_{u\in\VV},s,d)$.
\end{definition}
If not mentioned otherwise, for any subset $\VV' \subseteq \VV$, $\alpha(\VV')$ and $w(\VV')$, we denote $\sum_{u \in \VV'} \alpha(u)$ and $\sum_{u \in \VV'} w(u)$, respectively.

In our next problem, the knapsack subset of items must be a path between two given vertices.

\begin{definition}[\pathknapsack]\label{def:path}
	Given an undirected graph $\GG=(\VV,\EE)$, non-negative weights of vertices $(w(u))_{u\in\VV}$, non-negative values of vertices $(\alpha(u))_{u\in\VV}$, size $s$ of the knapsack, target value $d$, two vertices $x$ and $y$, compute if there exists a subset $\UU\subseteq\VV$ of vertices such that:\shortversion{ (i) \UU is a path between $x$ and $y$ in $\GG$, (ii) $w(\UU)=\sum_{u\in\UU} w(u) \le s$, and $\alpha(\UU)=\sum_{u\in\UU} \alpha(u) \ge d$.}
	\longversion{\begin{enumerate}
		\item \UU is a path between $x$ and $y$ in $\GG$,
    \item $w(\UU)=\sum_{u\in\UU} w(u) \le s$,
    \item $\alpha(\UU)=\sum_{u\in\UU} \alpha(u) \ge d$.
	\end{enumerate}}
	We denote an arbitrary instance of \pathknapsack by $(\GG,(w(u))_{u\in\VV},(\alpha(u))_{u\in\VV},s,d,x,y)$.
\end{definition}

In \shortestpathknapsack, the knapsack subset of items must be a shortest path between two given vertices.

\begin{definition}[\shortestpathknapsack]\label{def:path}
Given an undirected edge-weighted graph $\GG=(\VV,\EE,c:\EE\longrightarrow\RB_{\ge0})$, positive weights of vertices $(w(u))_{u\in\VV}$, non-negative values of vertices $(\alpha(u))_{u\in\VV}$, size $s$ of knapsack, target value $d$, two vertices $x$ and $y$, compute if there exists a subset $\UU\subseteq\VV$ of vertices such that:
\begin{enumerate}
	\item \UU is a shortest path between $x$ and $y$ in $\GG$,
    \item $w(\UU)=\sum_{u\in\UU} w(u) \le s$,
    \item $\alpha(\UU)=\sum_{u\in\UU} \alpha(u) \ge d$
\end{enumerate}
We denote an arbitrary instance of \shortestpathknapsack by $(\GG,(w(u))_{u\in\VV},(\alpha(u))_{u\in\VV},s,d,x,y)$.
\end{definition}

If not mentioned otherwise, we use $n,s,d,\GG$ and $\VV$ to denote respectively the number of vertices,  the size of the knapsack, the target value, the input graph, and the set of vertices in the input graph.


\begin{definition}[Treewidth] 
Let $G = (V_G,E_G)$ be a graph.  A {\em tree-decomposition} of a graph $G$ is a pair 
$(\mathbb{T} = (V_{\mathbb{T}},E_{\mathbb{T}}),\mathcal{ X}=\{X_{t}\}_{t\in V_{\mathbb T}})$,  where 
${\mathbb T}$ is a tree where every node $t\in V_{\mathbb T}$ 
is assigned a subset $X_t\subseteq V_G$, called a bag,  such that the following conditions hold. 
\begin{itemize}
\item $\bigcup_{t\in V_\mathbb{T}}{X_t}=V_G$,
\item for every edge $\{x,y\}\in E_G$ there is a $t\in V_\mathbb{T}$ such that  $x,y\in X_{t}$, and 
\item for any $v\in V_G$ the subgraph of $\mathbb{T}$ induced by the set  $\{t\mid v\in X_{t}\}$ is connected.
\end{itemize}

The {\em width} of a tree decomposition is $\max_{t\in V_\mathbb{T}} |X_t| -1$. The {\em treewidth} of $G$ 
is the  minimum width over all tree decompositions of $G$ and is denoted by $\tw(G)$. 
 
A tree decomposition  $(\mathbb{T},\mathcal{ X})$ is called a {\em nice edge tree decomposition} if $\mathbb{T}$ is a tree rooted at some node $r$ where $ X_{r}=\emptyset$, each node of $\mathbb{T}$ has at most two children, and each node is of one of the following kinds:
\begin{itemize}
\item {\bf Introduce node}: a node $t$ that has only one child $t'$ where $X_{t}\supset X_{t'}$ and  $|X_{t}|=|X_{t'}|+1$.
\item {\bf Introduce edge node} a node $t$ labeled with an edge between
$u$ and $v$, with only one child $t'$ such that $\{u,v\}\subseteq X_{t'}=X_{t}$. 
This bag is said to introduce $uv$. 
\item {\bf  Forget vertex node}: a node $t$ that has only one child $t'$  where $X_{t}\subset X_{t'}$ and  $|X_{t}|=|X_{t'}|-1$.
\item {\bf Join node}:  a node  $t$ with two children $t_{1}$ and $t_{2}$ such that $X_{t}=X_{t_{1}}=X_{t_{2}}$.
\item {\bf Leaf node}: a node $t$ that is a leaf of $\mathbb T$, and $X_{t}=\emptyset$. 
\end{itemize}
We additionally require that every edge is introduced exactly once. 
One can  show that  a tree decomposition of width $t$ can be transformed into 
a nice tree decomposition of the same width $t$ and  with 
 $\mathcal{O}(t |V_G|)$ nodes, see~e.g.~\cite{BODLAENDER201842,DBLP:books/sp/CyganFKLMPPS15}. For a node $t \in \mathbb{T}$, let $\mathbb{T}_t$ be the subtree of $\mathbb{T}$ rooted at $t$, and $V(\mathbb{T}_t)$ denote the vertex set in that subtree. Then $\beta(t)$ is the subgraph of $G$ where the vertex set is  $\bigcup_{t' \in V(\mathbb{T}_t)} X_{t'}$ and the edge set is the union of the set of edges introduced in each $t', t' \in V(\mathbb{T}_t)$. We denote by $V(\beta(t))$ the set of vertices in that subgraph, and by $E(\beta(t))$ the set of edges of the subgraph.
 

In this paper, we sometimes fix a vertex $v\in V_G$ and include it in every bag of a nice edge tree decomposition $(\mathbb{T},\mathcal{X})$ of $G$, with the effect of the root bag and each leaf bag containing $v$. For the sake of brevity, we also call such a modified tree decomposition a nice tree decomposition. Given the tree $\mathbb{T}$ rooted at the node $r$, for any nodes $t_1,t_2 \in V_\mathbb{T}$, the distance between the two nodes in $\mathbb{T}$ is denoted by $\sf{dist}_\mathbb{T}(t_1,t_2)$.

\end{definition}

\section{\pa}
We present our results for \pa in this section. First, we show that \pa is strongly \NPC by reducing it from \vc, which is known to be \NPC even for $3$-regular graphs~\cite[folklore]{DBLP:journals/dm/FleischnerSS10}. Hence, we do not expect a pseudo-polynomial time algorithm for \pa, unlike \kp.

\begin{definition}[\vc]Given a graph $\GG=(\VV,\EE)$ and a positive integer $k$, compute if there exists a subset $\VV' \subseteq \VV$ such that at least one end-point of every edge belongs to $\VV'$ and $|\VV'|\leq k$. We denote an arbitrary instance of \vc by $(\GG,k)$.
\end{definition}

\begin{theorem}\label{thm:pa-gen-npc}
\pa is strongly \NPC even when the maximum degree of the input graph is four.
\end{theorem}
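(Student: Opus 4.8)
The plan is to first establish membership in \NP, which is routine: a certificate is the subset $\UU \subseteq \VV$ itself, and connectivity, the weight bound $w(\UU) \le s$, and the value bound $\alpha(\UU) \ge d$ can all be checked in polynomial time. The real work is the hardness, and I would reduce from \vc on cubic ($3$-regular) graphs, which is \NPC. Given such an instance $(G = (V, E), k)$ with $|V| = n$ and $|E| = m$, I would build a \pa instance whose maximum degree is exactly four and whose numbers are all bounded by a polynomial in $n + m$, so that the resulting hardness is strong.

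The construction would have three kinds of vertices. (i) For each original vertex $v \in V$ keep a copy $v$ with weight $w(v) = 1$ and value $\alpha(v) = 0$. (ii) Subdivide every edge: for each $e = \{u, v\} \in E$ introduce an edge-vertex $x_e$ adjacent to $u$ and $v$, with $w(x_e) = 0$ and $\alpha(x_e) = 1$. (iii) Add a backbone path $p_1 - p_2 - \cdots - p_n$ and join each $p_i$ to the $i$-th original vertex $v_i$, setting $w(p_i) = 0$ and $\alpha(p_i) = 1$. Finally set $s = k$ and $d = n + m$. In a cubic graph every $v_i$ is incident to exactly three edges, so its degree becomes $3 + 1 = 4$; edge-vertices have degree $2$ and backbone vertices degree at most $3$, giving maximum degree four as required.

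For correctness, the value target $d = n + m$ is attainable only by taking all $n$ backbone vertices together with all $m$ edge-vertices (original vertices contribute no value), so any solution is forced to contain them. The backbone is internally connected, and each $v_i$ that is chosen attaches to the backbone through $p_i$; hence the only way an edge-vertex $x_e$ with $e = \{u, v\}$ can be connected to the rest of the solution is if at least one of $u, v$ is also chosen. Thus a solution is connected precisely when the set $S = \UU \cap V$ of chosen original vertices is a vertex cover of $G$, and since original vertices are the only ones carrying nonzero weight, the budget constraint reads $|S| \le k$. This yields the desired equivalence: $G$ has a vertex cover of size at most $k$ if and only if the constructed instance is a yes-instance.

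I expect the main obstacle to be arranging that the connectivity requirement captures plain \vc rather than the harder \emph{connected} vertex cover, while simultaneously keeping the degree at four. The backbone path is what resolves both issues at once: it supplies connectivity among the selected vertices essentially for free (so $S$ itself need not induce a connected subgraph), yet it touches the edge-vertices only indirectly, through their original endpoints, so covering every edge stays necessary; and because it contributes just one extra edge to each original vertex, the degree bound of four is preserved. Some care is also needed to confirm that dropping any single backbone vertex or edge-vertex strictly lowers the attainable total value below $d$, which is exactly what forces all of them into every solution.
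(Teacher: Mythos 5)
Your proposal is correct and follows essentially the same reduction as the paper: \vc on $3$-regular graphs, with weight-$1$/value-$0$ copies of the original vertices, value-$1$ edge-subdivision vertices, and a zero-weight backbone path attached to each original vertex to supply connectivity without requiring a connected vertex cover. The only (harmless) deviation is that you give the backbone vertices value $1$ and set $d = n + m$ to force them into every solution explicitly, whereas the paper gives them value $0$, sets $d = |\EE|$, and lets connectivity of the forced edge-vertices do the work.
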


\begin{proof}
    Clearly, \pa $\in$ \NP. We reduce \vc to \pa to prove NP-hardness. Let $(\GG=(\VV=\{v_i: i\in[n]\},\EE),$ $k)$ be an arbitrary instance of \vc where \GG is $3$-regular. We construct the following instance $(\GG^\pr=(\VV^\pr,\EE^\pr),(w(u))_{u\in\VV^\pr},(\alpha(u))_{u\in\VV^\pr},s,d)$ of \pa.
    \begin{align*}
        &\VV^\pr = \{u_i, g_i: i\in[n]\} \cup \{h_e: e\in \EE\}\\
        &\EE^\pr = \{\{u_i,h_e\}: i\in[n], e \in\EE, e\text{ is incident on }v_i\text{ in }\GG\} \\
        &\cup \{\{u_i,g_i\}: i\in[n]\} \cup \{\{g_i,g_{i+1}\}:i \in [n-1]\}\\
        &w(u_i) = 1, \alpha(u_i)=0, w(g_i)=0, \alpha(g_i)=0, \forall i\in[n]\\
        &w(h_e)=0, \alpha(h_e)=1, \forall e\in\EE, s=k, d=|\EE|
    \end{align*}
    We observe that the maximum degree of $\GG^\pr$ is at most four --- (i) the degree of $u_i$ is four for every $i\in[n]$, since \GG is $3$-regular and $u_i$ has an edge to $g_i$, (ii) the degree of $h_e$ is two for every $e\in\EE$, and (iii) the degree of $g_i$ is at most three for every $i\in[n]$ since the set $\{g_i:i\in[n]\}$ induces a path. We claim that the two instances are equivalent.

    In one direction, let us suppose that the \vc instance is a \yes instance. Let $\WW\subseteq\VV$ be a vertex cover of \GG with $|\WW|\le k$. We consider $\UU=\{u_i: i\in[n],v_i\in\WW\}\cup\{g_i: i\in[n]\}\cup\{h_e: e\in\EE\}\subseteq \VV^\pr$. We claim that $\GG^\pr[\UU]$ is connected. Since $\{g_i:i\in[n]\}$ induces a path and there is an edge between $u_i$ and $g_i$ for every $i\in[n]$, the induced subgraph $\GG^\pr[\{u_i: i\in[n],v_i\in\WW\}\cup\{g_i: i\in[n]\}]$ is connected. Since \WW is a vertex cover of \GG, every edge $e\in\EE$ is incident on at least one vertex in \WW. Hence, every vertex $h_e, e\in\EE,$ has an edge with at least one vertex in $\{u_i: i\in[n],v_i\in\WW\}$ in the graph $\GG^\pr$. Hence, the induced subgraph $\GG^\pr[\UU]$ is connected. Now we have $w(\UU)=\sum_{i=1}^n w(u_i)\mathbbm{1}(u_i\in\UU) + \sum_{i=1}^n w(g_i) + \sum_{e\in\EE} w(h_e)=|\UU|\le k$. We also have $\alpha(\UU)=\sum_{i=1}^n \alpha(u_i)\mathbbm{1}(u_i\in\UU) + \sum_{i=1}^n \alpha(g_i) + \sum_{e\in\EE} \alpha(h_e)=|\EE|$. Hence, the \pa instance is also a \yes instance.

    In the other direction, let us assume that the \pa instance is a \yes instance. Let $\UU\subseteq\VV^\pr$ be a solution of the \pa instance. We consider $\WW=\{v_i: i\in[n]: u_i\in\UU\}$. Since $s=k$, we have $|\WW|\le k$. Also, since $d=|\EE|$, we have $\{h_e: e\in\EE\}\subseteq\UU$. We claim that \WW is a vertex cover of \GG. Suppose not, then there exists an edge $e\in\EE$ which is not covered by \WW. Then none of the neighbors of $h_e$ belongs to \UU contradicting our assumption that \UU is a solution and thus $\GG^\pr[\UU]$ should be connected. Hence, \WW is a vertex cover of \GG and thus the \vc instance is a \yes instance.

    We observe that all the numbers in our reduced \pa instance are at most the number of edges of the graph. Hence, our reduction shows that \pa is strongly \NPC.
\end{proof}

 \Cref{thm:pa-gen-npc} also implies the following corollary in the framework of parameterized complexity.

\begin{corollary}\label{cor:pa-max-deg}
    \pa is \PNPH parameterized by the maximum degree of the input graph.
\end{corollary}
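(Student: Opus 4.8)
The plan is to invoke the standard characterization of para-\NP-hardness together with the reduction already established in \Cref{thm:pa-gen-npc}. Recall that a parameterized problem is \PNPH precisely when the underlying classical problem remains \NPH on the slice of instances on which the parameter is pinned to some fixed constant value; this is the combinatorial reformulation of para-\NP-completeness under fpt-reductions (see, e.g., the textbook treatment in \cite{DBLP:books/sp/CyganFKLMPPS15}). So the entire task reduces to exhibiting a single constant $c$ such that \pa restricted to graphs of maximum degree $c$ is \NPH.

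First I would observe that \Cref{thm:pa-gen-npc} supplies exactly this: it shows that \pa is (strongly) \NPC on the family of instances whose underlying graph $\GG^\pr$ has maximum degree at most four. Hence the parameter ``maximum degree of the input graph'' takes the constant value $4$ across an \NPH family of instances. Taking $c=4$ in the characterization above immediately yields that \pa, parameterized by the maximum degree, is \PNPH.

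There is essentially no technical obstacle here, since all the combinatorial work already lives in \Cref{thm:pa-gen-npc}; the only care needed is to phrase the conclusion in terms of the parameter being bounded by (rather than exactly equal to) the constant $4$, which is harmless because a hardness result on maximum-degree-$\le 4$ instances restricts verbatim to the relevant parameter slice. One should also note that the trivial identity map on the maximum-degree-four instances of \Cref{thm:pa-gen-npc} is already an fpt-reduction with the parameter frozen at a constant, which is precisely what the definition of para-\NP-hardness demands.
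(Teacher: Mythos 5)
Your proposal is correct and matches the paper's reasoning exactly: the paper states \Cref{cor:pa-max-deg} as an immediate consequence of \Cref{thm:pa-gen-npc}, precisely because \NP-hardness on instances with the parameter fixed to the constant $4$ (indeed, the reduced instances have maximum degree exactly four, as each $u_i$ has degree four) is the definition of para-\NP-hardness. Your extra remark about ``at most'' versus ``exactly'' four is careful but unnecessary here for the same reason.
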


We next show that \pa is \NPC even for trees. For that, we reduce from the \NPC problem \kp.

\begin{definition}[\kp]
Given a set $\XX=[n]$ of $n$ items with sizes $\theta_1,\ldots,\theta_n,$ values $p_1,\ldots,p_n$, capacity $b$ and target value $q$, compute if there exists a subset $\II \subseteq [n]$ such that $\sum_{i\in \II} \theta_i \leq b$ and $\sum_{i\in \II} p_i \geq q$. We denote an arbitrary instance of \kp by $(\XX,(\theta_i)_{i\in\XX}, (p_i)_{i\in\XX}, b,q)$.
\end{definition}

\begin{theorem}\label{thm:pa-star-npc}
	\pa is \NPC even for star graphs.
\end{theorem}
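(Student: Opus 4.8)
The plan is to reduce \kp to \pa restricted to star graphs. Membership in \NP is immediate: given a candidate subset $\UU$, one verifies in polynomial time that $\GG[\UU]$ is connected and that the size and value constraints $w(\UU)\le s$ and $\alpha(\UU)\ge d$ hold. So the work is entirely in the hardness reduction, and the guiding idea is that a star makes connectivity essentially trivial: any subset of vertices that contains the center induces a connected sub-star, while every other non-singleton subset is disconnected.

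For the construction, I would start from an arbitrary \kp instance $(\XX=[n],(\theta_i)_{i\in\XX},(p_i)_{i\in\XX},b,q)$ and build a star $\GG^\pr=(\VV^\pr,\EE^\pr)$ with a single center $c$ and one leaf $u_i$ per item $i\in[n]$, i.e. $\VV^\pr=\{c\}\cup\{u_i:i\in[n]\}$ and $\EE^\pr=\{\{c,u_i\}:i\in[n]\}$. I would place the entire weight and value of each item on its leaf, setting $w(u_i)=\theta_i$ and $\alpha(u_i)=p_i$, while making the center completely \emph{free}: $w(c)=0$ and $\alpha(c)=0$. Finally I set $s=b$ and $d=q$. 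The point of the zero-weight, zero-value center is to decouple connectivity from the knapsack constraints: it can always be included at no cost, turning any chosen set of leaves into a connected solution.

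The equivalence then splits cleanly. In the forward direction, if $\II$ is a feasible \kp solution, take $\UU=\{c\}\cup\{u_i:i\in\II\}$; this is connected, and since $c$ contributes nothing, $w(\UU)=\sum_{i\in\II}\theta_i\le b=s$ and $\alpha(\UU)=\sum_{i\in\II}p_i\ge q=d$. In the converse direction, given any feasible $\UU$ for \pa, set $\II=\{i:u_i\in\UU\}$; because the center again contributes zero to both sums, $\sum_{i\in\II}\theta_i=w(\UU)\le s=b$ and $\sum_{i\in\II}p_i=\alpha(\UU)\ge q=d$, so $\II$ is feasible for \kp. Notably, the backward direction does not even invoke connectivity, which is why the reduction is so short.

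There is no real obstacle here; the only thing to state carefully is the numerical/degenerate bookkeeping. Since the reduction copies the item weights and values verbatim, the numbers appearing in the \pa instance can be exponentially large, so this yields \NP-completeness but \emph{not} strong \NP-completeness --- consistent with the pseudo-polynomial and treewidth-based algorithms proved later (a star has treewidth one). I would also phrase the argument so that the degenerate case $\II=\emptyset$ causes no trouble: it corresponds to $\UU=\{c\}$, a legitimate connected single-vertex solution, which meets the value target exactly when $d=0$.
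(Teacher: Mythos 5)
Your proposal is correct and follows essentially the same route as the paper's proof: a reduction from \kp to a star whose center has zero weight and zero value, with one leaf per item carrying its weight $\theta_i$ and value $p_i$, and $s=b$, $d=q$, with the identical forward and backward arguments. Your added remarks --- that the backward direction never invokes connectivity, that the reduction yields only ordinary (not strong) \NP-completeness, and the handling of the degenerate case $\UU=\{c\}$ --- are accurate refinements of the same argument rather than a different approach.
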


\begin{proof}
\pa clearly belongs to \NP. To show \NP-hardness, we reduce from \kp. Let $(\XX=[n],(\theta_i)_{i\in\XX}, (p_i)_{i\in\XX}, b,q)$ be an arbitrary instance of \kp. We consider the following instance $(\GG(\VV,\EE),(w(u))_{u\in\VV},(\alpha(u))_{u\in\VV},s,d)$ of \pa.
\begin{align*}
    &\VV = \{v_0, v_1, \ldots, v_n\}\\
    &\EE = \{\{v_0,v_i\}: 1\le i\le n\}\\
    &w(v_i) = \theta_i\, \alpha(v_i) = p_i;\forall i\in[n], w(v_0)=\alpha(v_0)=0;\\
    &s = b, d = q
\end{align*}
We now claim that the two instances are equivalent.

In one direction, let us suppose that the \kp instance is a \yes instance. Let $\WW\subseteq\XX$ be a solution of \kp. Let us consider $\UU=\{v_i: i\in\WW\}\cup\{v_0\}\subseteq\VV$. We observe that $\GG[\UU]$ is connected since $v_0\in\UU$. We also have
\[ w(\UU) = \sum_{i\in\WW} w(v_i) = \sum_{i\in\WW} \theta_i \le b = s,\]
and
\[ \alpha(\UU) = \sum_{i\in\WW} \alpha(v_i) = \sum_{i\in\WW} p_i \ge q = d.\]
Hence, the \pa instance is a \yes instance.

In the other direction, let us assume that the \pa instance is a \yes instance with $\UU\subseteq\VV$ be one of its solution. Let us consider a set $\WW=\{i: i\in[n], v_i\in\UU\}$. We now have
\[ \sum_{i\in\WW} \theta_i = \sum_{i\in\WW} w(v_i) = w(\UU) \le s = b, \]
and
\[ \sum_{i\in\WW} p_i = \sum_{i\in\WW} \alpha(v_i) = \alpha(\UU) \ge d=q.\]
Hence, the \kp instance is a \yes instance.
\end{proof}

We complement the hardness result in \Cref{thm:pa-star-npc} by designing a pseudo-polynomial-time algorithm for \pa for trees. In fact, we have designed an algorithm with running time $2^{\OO(\tw\log \tw)}\cdot n^{\mathcal{O}(1)}\cdot {\sf min}\{s^2,d^2\}$ where the treewidth of the input graph is $\tw$. We present this algorithm next.

\subsection{Treewidth as a Parameter}

\begin{theorem}\label{thm:treewidth-pa}
There is an algorithm for \pa with running time $2^{\OO(\tw\log \tw)}\cdot n\cdot {\sf min}\{s^2,d^2\}$ where $n$ is the number of vertices in the input graph, $\tw$ is the treewidth of the input graph, $s$ is the input size of the knapsack and $d$ is the input target value.
\end{theorem}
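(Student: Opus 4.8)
The plan is to design a dynamic program over a nice edge tree decomposition $(\mathbb{T},\mathcal{X})$ of $\GG$ of width $\tw$, following the standard partition-based approach for connectivity problems. For a node $t$, a partial solution is a chosen subset of vertices of $\beta(t)$, and its \emph{signature} on the bag $X_t$ consists of (i) the subset $S\subseteq X_t$ of bag vertices that are chosen, and (ii) a partition $\PP$ of $S$ whose blocks record which chosen bag vertices currently lie in a common connected component of the partial solution restricted to $\beta(t)$. To track the knapsack constraints I would additionally index the table by the weight $w\in\{0,1,\dots,s\}$ consumed so far and store, in each cell $D_t[S,\PP,w]$, the maximum value $\alpha$ attainable by a partial solution with that signature and weight. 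Symmetrically, I would maintain a second table $D_t[S,\PP,a]$ indexed by the value $a\in\{0,1,\dots,d\}$ storing the minimum weight. The number of signatures per node is $2^{|X_t|}\cdot(\#\text{partitions of }X_t)\le 2^{\tw}\cdot\tw^{\tw}=2^{\OO(\tw\log\tw)}$.

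Next I would specify the transitions. Leaf nodes are trivial. At an introduce-vertex node I branch on whether the new vertex is chosen, placing it in its own singleton block and adding its weight and value. At an introduce-edge node introducing $\{u,v\}$ with both endpoints chosen, I merge the blocks of $u$ and $v$ in $\PP$ (this is the only transition that grows connectivity). At a forget node forgetting $u$, I drop $u$ from $S$ and from its block; crucially, if $u$ was the last bag representative of its block, that component has been completed entirely inside $\beta(t)$, and for a connected solution I must forbid completing more than one component. I would enforce this either by fixing, as foreshadowed in \Cref{sec:prelim}, a solution vertex $v^\star$ in every bag and forbidding the completion of any block not containing $v^\star$ (guessing $v^\star$ over the $\OO(n)$ vertices), or by carrying an auxiliary bit that counts completed components capped at one. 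At the join node I combine the two child tables: for a common subset $S$ I take partitions $\PP_1,\PP_2$ from the children, form their join (the transitive closure of $\PP_1\cup\PP_2$) as the merged partition, and convolve over the weight split $w=w_1+w_2-w(S)$ -- subtracting $w(S)$ since the shared bag vertices are counted in both children -- setting the value to $\alpha_1+\alpha_2-\alpha(S)$.

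The answer is read off at the root $r$ with $X_r=\emptyset$: the instance is a \yes instance iff some reachable cell corresponds to exactly one completed component with weight at most $s$ and value at least $d$. For correctness I would prove by induction up $\mathbb{T}$ the invariant that $D_t[S,\PP,w]$ equals the maximum value over partial solutions of $\beta(t)$ whose chosen bag vertices are exactly $S$, whose partition of $S$ into connected-component classes is exactly $\PP$, that complete no forbidden component, and whose weight is $w$. The main obstacle I anticipate is the join node: I must argue that taking the partition join correctly reconstructs the connected components of the union of two partial solutions sharing precisely the bag $X_t$ (so two components merge iff they are linked through a shared bag vertex, with neither spurious nor missed merges), that the weight/value convolution avoids double-counting the shared vertices, and that the completed-component bookkeeping guarantees the root certifies a single connected component rather than a forest.

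For the running time, I would observe that the nice edge tree decomposition has $\OO(\tw\cdot n)$ nodes, each carrying $2^{\OO(\tw\log\tw)}$ signatures; the join nodes incur an extra factor $s^2$ from the weight convolution (ranging over all $w_1$ for each output $w$, for each of the $2^{\OO(\tw\log\tw)}$ partition pairs), while all other transitions run in time linear in the number of cells. The symmetric value-indexed table, storing the minimum weight and convolving over values, incurs a factor $d^2$ at joins instead. Running the weight-indexed table when $s\le d$ and the value-indexed one otherwise, and absorbing the polynomial $\tw$ factors into the $2^{\OO(\tw\log\tw)}$ term, yields the claimed bound $2^{\OO(\tw\log\tw)}\cdot n\cdot\min\{s^2,d^2\}$.
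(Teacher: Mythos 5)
Your proposal matches the paper's proof in all essentials: the same partition-based dynamic program over a nice edge tree decomposition, the same trick of guessing a solution vertex $v^\star$ and placing it in every bag so that forget nodes can safely discard states that complete a component not containing $v^\star$, and equivalent knapsack bookkeeping (your weight-indexed max-value and value-indexed min-weight tables carry exactly the information of the paper's lists of undominated $(w,\alpha)$ pairs, yielding the same $\min\{s^2,d^2\}$ factor and the same overall bound). The one local deviation is at join nodes, where you combine arbitrary child partitions $\PP_1,\PP_2$ via the transitive closure of their union while the paper as written combines entries only for identical partitions in the two children; your variant is the standard (and if anything more careful) formulation of the same step, not a different approach.
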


\begin{proof}
Let $(G = (V_G,E_G),{(w(u))_{u \in V_G}, (\alpha(u))_{u\in V_G}}, s,d)$ be an input instance of \pa such that $\tw=tw(G)$. Let $\mathcal{U} \subseteq V_G$ be a solution subset for the input instance. For technical purposes, we guess a vertex $v \in \mathcal{U}$ --- once the guess is fixed, we are only interested in finding solution subsets $\mathcal{U}'$ that contain $v$ and $\mathcal{U}$ is one such candidate. We also consider a nice edge tree decomposition $(\mathbb{T} = (V_{\mathbb{T}},E_{\mathbb{T}}),\mathcal{X})$ of $G$ that is rooted at a node $r$, and where $v$ has been added to all bags of the decomposition. Therefore, $X_{r} = \{v\}$ and each leaf bag is the singleton set $\{v\}$.

We define a function $\ell: V_{\mathbb{T}} \rightarrow \mathbb{N}$.
For a vertex $t \in V_\mathbb{T}$, $\ell(t) = \sf{dist}_\mathbb{T}(t,r)$, where $r$ is the root. Note that this implies that $\ell(r) = 0$. Let us assume that the values that $\ell$ takes over the nodes of $\mathbb{T}$ is between $0$ and $L$. Now, we describe a dynamic programming algorithm over $(\mathbb{T},\mathcal{X})$ for \pa.

    \paragraph{\textbf{States.}} We maintain a DP table $D$ where a state has the following components:
    \begin{enumerate}
    \item $t$ represents a node in $V_\mathbb{T}$.
    \item $\mathbb{P} = (P_0,\ldots,P_m), m\leq \tw+1$ represents a partition of the vertex subset $X_t$. 
    \end{enumerate}
    \paragraph{\textbf{Interpretation of States.}} For a state $[t,\mathbb{P}]$, if there is a solution subset $\mathcal{U}$ let $\mathcal{U'} = \mathcal{U} \cap V(\beta(t))$. Let $\beta_{\mathcal{U'}}$ be the graph induced on $\mathcal{U'}$ in $\beta(t)$. Let $C_1,C_2,\ldots,C_m$ be the connected components of $\beta_{\mathcal{U'}}$. Note that $m\leq \tw+1$. Then in the partition $\mathbb{P} = (P_0,P_1,\ldots,P_m)$, $P_i = C_i \cap X_t, 1 \leq i \leq m$. Also, $P_0 = X_t \setminus \mathcal{U'}$. 

Given a node $t\in V_{\mathbb{T}}$, a subgraph $H$ of $\beta(t)$ is said to be a $\mathbb{P}$-subgraph if (i) the connected components $C_1,C_2,\ldots,C_m$ of $H$, $m\leq \tw+1$ are such that $P_i = C_i \cap X_t, 1 \leq i \leq m$, (ii) $P_0 = X_t \setminus H$. For each state $[t,\mathbb{P}]$, a pair $(w,\alpha)$ with $w\leq s$ is said to be feasible if there is a $\mathbb{P}$-subgraph of $\beta(t)$ whose total weight is $w$ and total value is $\alpha$. Moreover, a feasible pair $(w,\alpha)$ is said to be undominated if there is no other $\mathbb{P}$-subgraph with weight $w'$ and value $\alpha'$ such that $w' \leq w$ and $\alpha' \geq \alpha$. Please note that by default, an empty $\mathbb{P}$-subgraph has total weight $0$ and total value $0$.

For each state $[t,\mathbb{P}]$, we initialize $D[t,\mathbb{P}]$ to the list $\{(0,0)\}$. Our computation shall be such that in the end each $D[t,\mathbb{P}]$ stores the set of all undominated feasible pairs $(w,\alpha)$ for the state $[t,\mathbb{P}]$.

    \paragraph{\textbf{Dynamic Programming on $D$.}} We describe the following procedure to update the table $D$. We start updating the table for states with nodes $t\in V_\mathbb{T}$ such that $\ell(t)=L$. When all such states are updated, then we move to update states where the node $t$ has $\ell(t) = L-1$, and so on till we finally update states with $r$ as the node --- note that $\ell(r) =0$. For a particular $i$, $0\leq i< L$ and a state $[t,\mathbb{P}]$ such that $\ell(t) = i$, we can assume that $D[t',\mathbb{P}']$ have been evaluated for all $t'$ such that $\ell(t')>i$ and all partitions $\mathbb{P}'$ of $X_{t'}$. Now we consider several cases by which $D[t,\mathbb{P}]$ is updated based on the nature of $t$ in $\mathbb{T}$:
    \begin{enumerate}
    \item Suppose $t$ is a leaf node. Note that by our modification, $X_t = \{v\}$. There can be only 2 partitions for this singleton set --- $\mathbb{P}_t^1 = (P_0 = \emptyset, P_1 = \{v\})$ and $\mathbb{P}_t^2 = (P_0 = \{v\}, P_1 = \emptyset)$. If $\mathbb{P} = \mathbb{P}_t^1$ then $D[t,\mathbb{P}]$ stores the pair $(w(v),\alpha(v))$ if $w(v) \leq s$ and otherwise no modification is made. If $\mathbb{P} = \mathbb{P}_t^2$ then $D[t,\mathbb{P}]$ is not modified.
    
    \item Suppose $t$ is a forget vertex node. Then it has an only child $t'$ where $X_t \subset X_{t'}$ and there is exactly one vertex $u \neq v$ that belongs to $X_{t'}$ but not to $X_t$. Let $\mathbb{P}' = (P'_0,P'_1,\ldots,P'_{m'})$ be a partition of $X_{t'}$ such that when restricted to $X_t$ we obtain the partition $\mathbb{P} = (P_0,P_1,\ldots,P_m )$. For each such partition, we shall do the following.\\ Suppose $\mathbb{P}'$ has $u \in P'_0$, then each feasible undominated pair stored in $D[t',\mathbb{P}']$ is copied to $D[t,\mathbb{P}]$. \\
    Alternatively, suppose $\mathbb{P}'$ has $u \in P'_i, i>0$ and $|P'_i| >1$. Then, each feasible undominated pair stored in $D[t',\mathbb{P}']$ is copied to $D[t,\mathbb{P}]$.\\
    Finally, suppose $\mathbb{P}'$ has $u \in P'_i, i>0$ and $P'_i = \{u\}$. Then we do not make any changes to $D[t,\mathbb{P}]$.
    
    \item Suppose $t$ is an introduce node. Then it has an only child $t'$ where $X_{t'} \subset X_{t}$ and there is exactly one vertex $u \neq v$ that belongs to $X_{t}$ but not $X_{t'}$. Note that no edges incident to $u$ have been introduced yet, and so in $\beta(t)$ $u$ is not yet connected to any other vertex. Let $\mathbb{P}' = (P'_0,P'_1,\ldots,P'_{m'})$ be the partition of $X_{t'}$ obtained from restricting the partition $\mathbb{P} = (P_0,P_1,\ldots,P_m )$ to $X_{t'}$. First, suppose $u \in P_0$. Then we copy all pairs of $D[t',\mathbb{P}']$ to $D[t,\mathbb{P}]$. \\
    Next, suppose $u \in P_i, i>0$ and $P_i = \{u\}$. Then for each pair $(w,\alpha)$ in $D[t',\mathbb{P}']$, if $w + w(u) \leq s$ we add $(w+w(u),\alpha+\alpha(u))$ to the set in $D[t,\mathbb{P}]$.\\
    Finally, let $u \in P_i, i>0$ and $|P_i| >1 $. Then we make no changes to $D[t,\mathbb{P}]$.
    
    \item Suppose $t$ is an introduce edge node. Then it has an only child $t'$ where $X_{t'} = X_{t}$, and additionally for two vertices $u,w \in X_t = X_{t'}$, the edge $\{u,w\}$ is introduced into $\beta(t)$. First, suppose $\mathbb{P} = (P_0,P_1,\ldots,P_m)$ is such that one of $u,w$ is in $P_0$. Then all pairs of $D[t',\mathbb{P}]$ are copied to $D[t,\mathbb{P}]$.\\
    Next, let $u \in P_i$, $w\in P_j$, $i\neq j \neq 0$. Then no updates are made to $D[t,\mathbb{P}]$. \\
    Finally, suppose $u,w \in P_i, i>0$. Copy to $D[t,\mathbb{P}]$ all pairs from $D[t',\mathbb{P}]$. Consider a partition $\mathbb{P}'$ where the part $P'$ contains $u$ and $P''$ contains $w$. $\mathbb{P}'$ is such that $P_i = P' \cup P''$ and any other $P_j$ with $j \neq i$ is a part in $\mathbb{P}'$. Any pair of $D[t',\mathbb{P}']$ is copied to $D[t,\mathbb{P}]$.
    \item Suppose $t$ is a join node. Then it has two children $t_1,t_2$ such that $X_t = X_{t_1} = X_{t_2}$. Consider $\mathbb{P} = (P_0,P_1,\ldots,P_m)$. Let $(w_{\mathbb{P}}, \alpha_{\mathbb{P}})$ be the total weight and value of the vertices in $\cup_{1\leq i \leq m} P_i$. Consider a pair $(w_1,\alpha_1)$ in $D[t_1,\mathbb{P}]$ and a pair $(w_2,\alpha_2)$ in $D[t_2,\mathbb{P}]$. Suppose $w_1 + w_2 - w_{\mathbb{P}} \leq s$. Then we add $(w_1 + w_2 - w_{\mathbb{P}}, \alpha_1+\alpha_2-\alpha_{\mathbb{P}})$ to $D[t,\mathbb{P}]$.
    \end{enumerate}
    
Finally, in the last step of updating $D[t,\mathbb{P}]$, we go through the list saved in $D[t,\mathbb{P}]$ and only keep undominated pairs.

The output of the algorithm is a pair $(w,\alpha)$ stored in $D[r,\mathbb{P} = (P_0 = \emptyset, P_1 = \{v\})]$ such that $w \leq s$ and $\alpha$ is the maximum value over all pairs in $D[r,\mathbb{P}]$.

    \paragraph{\textbf{Correctness of the Algorithm.}}

    Recall that we are looking for a solution $\mathcal{U}$ that contains the fixed vertex $v$ that belongs to all bags of the tree decomposition. First, we show that a pair $(w,\alpha)$ belonging to $D[t,\mathbb{P}]$ for a node $t \in V_\mathbb{T}$ and a partition $\mathbb{P}$ of $X_t$ corresponds to a $\mathbb{P}$-subgraph $H$ in $\beta(t)$. Recall that $X_r = \{v\}$. Thus, this implies that a pair $(w,\alpha)$ belonging to $D[r,\mathbb{P} = (P_0 = \emptyset, P_1 = \{v\})]$ corresponds to a connected subgraph of $G$. Moreover, the output is a pair that is feasible and with the highest value. 

    In order to show that a pair $(w,\alpha)$ belonging to $D[t,\mathbb{P}]$ for a node $t \in V_\mathbb{T}$ and a partition $\mathbb{P}$ of $X_t$ corresponds to a $\mathbb{P}$-subgraph $H$ in $\beta(t)$, we need to consider the cases of what $t$ can be:
    \begin{enumerate}
    \item When $t$ is a leaf node with $\ell(t) = i$, $X_t$ only contains $v$ and the update to $D$ is done such that $v$ is the corresponding subgraph to a stored pair. This is true in particular when $i =L$, the base case. From now we can assume that for a node $t$ with $\ell(t) = i < L$ all $D[t',\mathbb{P}']$ entries are correct and correspond to $\mathbb{P}'$-subgraphs in $\beta(t')$ when $\ell(t') > i$.
    \item When $t$ is a forget vertex node, let $t'$ be the child node and $u \neq v$ be the vertex that is being forgotten. We copy pairs from $D[t',\mathbb{P}']$ depending on the structure of $\mathbb{P}'$. Since $\ell(t') > \ell(t)$, by induction hypothesis all entries in $D[t',\mathbb{P}']$ for any partition $\mathbb{P}'$ of $X_{t'}$ are feasible. From the cases considered, we copy a pair to $D[t,\mathbb{P}]$ from a $D[t',\mathbb{P}']$ only when $u$ is not part of the $\mathbb{P}'$-subgraph or is in a component of $\mathbb{P}'$ that has vertices in $X_t$. Thus, the same subgraph is a $\mathbb{P}$-subgraph in $\beta(t)$.
    \item When $t$ is an introduce node, there is a child $t'$ we are introducing a vertex $u \neq v$ that has no adjacent edges added in $\beta(t)$. Since $\ell(t') > \ell(t)$, by induction hypothesis all entries in $D[t',\mathbb{P}']$ for any partition $\mathbb{P}'$ of $X_{t'}$ are feasible. We update pairs in $D[t,\mathbb{P}]$ from $D[t',\mathbb{P}']$ such that either $u$ is not considered as part of a $\mathbb{P}$-subgraph and the pair is certified by the $\mathbb{P}'$-subgraph, or $u$ is added to a $\mathbb{P}'$-subgraph in order to obtain a new $\mathbb{P}$-subgraph.
    \item When $t$ is an introduce edge node, there is a child $t'$ such that $X_t = X_{t'}$ and the only difference is that two vertices $u,w$ in the bags $X_t = X_{t'}$ now have an edge in $\beta(t)$. Since $\ell(t') > \ell(t)$, by induction hypothesis all entries in $D[t',\mathbb{P}']$ for any partition $\mathbb{P}'$ of $X_{t'}$ are feasible. The updates are made in the cases when one of $u$ or $w$ is not in the intended $\mathbb{P}$-subgraph and the included pair is certified by a $\mathbb{P'}$-subgraph, or when the $u$ and $w$ belong to different components of a $\mathbb{P}'$-subgraph and the new $\mathbb{P}$-subgraph has these components merged as a single component.
    \item When $t$ is a join node, there are two children $t_1,t_2$ such that $X_T = X_{t_1} = X_{t_2}$. Note that this implies that $V(\beta(t_1)) \cap V(\beta(t_2)) = X_t$. Since $\ell(t_1),\ell(t_2) > \ell(t)$, by induction hypothesis all entries in $D[t_i,\mathbb{P}']$ for any partition $\mathbb{P}'$ of $X_{t_i}$ are feasible for $i \in \{1,2\}$.We update pairs in $D[t,\mathbb{P}]$ when there is a $\mathbb{P}$-subgraph in $\beta(t_1)$ and a $\mathbb{P}$-subgraph in $\beta(t_2)$ and we take the union of these two subgraphs to obtain a $\mathbb{P}$-subgraph in $\beta(t)$.
    \end{enumerate}
    Thus in all cases of $t$, a pair added to $D[t,\mathbb{P}]$ for some partition of $X_t$ is a feasible pair. Recall that as a last step of updating $D[t,\mathbb{P}]$, we go through the entire list and keep only  undominated pairs in the list.

    What remains to be shown is that an undominated feasible solution $\mathcal{U}$ of \pa in $G$ is contained in $D[r,\mathbb{P} = (P_0 = \emptyset, P_1 = \{v\})]$. Let $w$ be the weight of $\mathcal{U}$ and $\alpha$ be the value. Recall that $v \in \mathcal{U}$. For each $t$, we consider the subgraph $\beta(t) \cap \mathcal{U}$. Let $C_1,C_2,\ldots,C_m$ be components of $\beta(t) \cap \mathcal{U}$ and let for each $1\leq i \leq m, P_i = X_t \cap C_i$. Also, let $P_0 = X_t \setminus \mathcal{U}$. Consider $\mathbb{P} = (P_0,P_1,\ldots,P_m)$. The algorithm updates in $D[t,\mathbb{P}]$ the pair $(w',\alpha')$ for the subsolution $\beta(t) \cap \mathcal{U}$. Therefore, $D[r,\mathbb{P} = (\emptyset,\{v\})]$ contains the pair $(w,\alpha)$. Thus, we are done.
    \paragraph{\textbf{Running time.}} There are $n$ choices for the fixed vertex $v$. Upon fixing $v$ and adding it to each bag of $(\mathbb{T}, \mathcal{X})$ we consider the total possible number of states.  There are at most $\OO(n)\cdot 2^{\tw\log \tw}$ states. For each state, since we are keeping only undominated pairs, for each $w$ there can be at most one pair with $w$ as the first coordinate; similarly, for each $\alpha$ there can be at most one pair with $\alpha$ as the second coordinate. Thus, the number of undominated pairs in each $D[t,\mathbb{P}]$ is at most ${\sf min}\{s,d\}$. By the description of the algorithm, the maximum length of the list stored at $D[t,\mathbb{P}]$ during updation, but before the check is made for only undominated pairs, is $2^{\tw \log \tw}\cdot{\sf min}\{s^2,d^2\}$. Thus, updating the DP table at any vertex takes $2^{\tw \log \tw}\cdot{\sf min}\{s^2,d^2\}$ time. Since there are $\OO(n\cdot\tw)$ vertices in $\mathbb{T}$, the total running time of the algorithm is $2^{\OO(\tw\log \tw)}\cdot n\cdot {\sf min}\{s^2,d^2\}$.
\end{proof}

\subsection{A Fixed Parameter Fully Pseudo-polynomial Time Approximation Scheme}

We now use the algorithm in \Cref{thm:treewidth-pa} as a black-box to design an $(1-\eps)$-factor approximation algorithm for optimizing the value of the solution and running in time $2^{\tw\log \tw}\cdot \text{poly}(n,1/\eps)$.

\begin{theorem}\label{thm-fptas}
There is an $(1-\eps)$-factor approximation algorithm for \pa for optimizing the value of the solution and running in time $2^{\OO(\tw\log \tw)}\cdot \text{poly}(n,1/\eps)$ where \tw is the treewidth of the input graph.
\end{theorem}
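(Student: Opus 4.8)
The plan is to use the standard value-rounding technique for knapsack-type problems, treating the pseudo-polynomial algorithm of \Cref{thm:treewidth-pa} as a black box that solves the value-maximization version exactly. First I would preprocess the instance by deleting every vertex $u$ with $w(u) > s$, since such a vertex can never lie in any feasible solution and its removal does not change the optimum. After this step let $\alpha_{\max} = \max_{u\in\VV}\alpha(u)$ (the case $\alpha_{\max}=0$ is trivial, every feasible connected subset being optimal). Because each surviving single vertex is itself a connected subgraph of weight at most $s$, the vertex attaining $\alpha_{\max}$ is a feasible solution, so $\OPT \ge \alpha_{\max}$; trivially $\OPT \le n\,\alpha_{\max}$. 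This lower bound on $\OPT$ is what will convert an additive error into a $(1-\eps)$ multiplicative guarantee.

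Next I would set the scaling factor $K = \eps\alpha_{\max}/n$ and define rounded values $\alpha'(u) = \lfloor \alpha(u)/K\rfloor$, leaving all weights and the size $s$ unchanged. The total rounded value of any subset is at most $\sum_{u}\alpha(u)/K \le n\alpha_{\max}/K = n^2/\eps$, so in the rounded instance every attainable value is an integer in $\{0,1,\ldots,\lfloor n^2/\eps\rfloor\}$. I would then run the algorithm of \Cref{thm:treewidth-pa} on the rounded instance. Although that algorithm is phrased with a fixed target $d$, it in fact computes the full list of undominated $(w,\alpha')$ pairs at the root bag $D[r,(P_0=\emptyset,P_1=\{v\})]$, and hence returns a connected weight-feasible subset $\mathcal{U}$ of maximum rounded value. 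Since each $D[t,\mathbb{P}]$ now stores at most $\min\{s,\,n^2/\eps\}+1 = \OO(n^2/\eps)$ undominated pairs, the $\min\{s^2,d^2\}$ factor in the running time is replaced by $\OO(n^4/\eps^2)$, yielding total time $2^{\OO(\tw\log\tw)}\cdot\text{poly}(n,1/\eps)$.

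Finally I would bound the quality of $\mathcal{U}$. Let $\mathcal{U}^\star$ be an optimal solution of the original instance. Using $\alpha(u) \ge K\alpha'(u) > \alpha(u) - K$ for every $u$, and the fact that $\mathcal{U}$ maximizes rounded value over feasible connected subsets while $\mathcal{U}^\star$ is one such subset, I obtain the chain $\alpha(\mathcal{U}) \ge K\,\alpha'(\mathcal{U}) \ge K\,\alpha'(\mathcal{U}^\star) > \alpha(\mathcal{U}^\star) - |\mathcal{U}^\star|K \ge \OPT - nK = \OPT - \eps\alpha_{\max} \ge (1-\eps)\OPT$, where the last inequality uses $\OPT \ge \alpha_{\max}$. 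This establishes the approximation ratio.

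I expect the only genuinely delicate point to be the bookkeeping that lets \Cref{thm:treewidth-pa} serve as a value-maximization oracle rather than a decision procedure, and the accompanying verification that the undominated-pair lists in the rounded instance have length $\OO(n^2/\eps)$, which is precisely what keeps the running time polynomial in $n$ and $1/\eps$. The approximation chain itself, and the reduction of the $d^2$ dependence to $\text{poly}(1/\eps)$, are routine once the scaling is fixed.
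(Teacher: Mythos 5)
Your proposal is correct and follows essentially the same route as the paper's proof: the same scaling factor $K = \eps\alpha_{\max}/n$ with rounded values $\alpha'(u) = \lfloor \alpha(u)/K\rfloor$, the same black-box invocation of the pseudo-polynomial treewidth algorithm on the rounded instance, the same chain of inequalities using $\OPT \ge \alpha_{\max}$, and the same $n^2/\eps$ bound on the total rounded value to control the running time. Your added remarks (handling $\alpha_{\max}=0$, justifying $\OPT\ge\alpha_{\max}$ via single-vertex feasibility, and noting that the DP's undominated-pair lists realize the maximization oracle) only make explicit what the paper leaves implicit.
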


\begin{proof}
Let $\II=(\GG=(\VV,\EE),(w(u))_{u\in\VV},(\alpha(u))_{u\in\VV},s)$ be an arbitrary instance of \pa where the goal is to output a connected subgraph \UU of maximum $\alpha(\UU)$ subject to the constraint that $w(\UU)\le s$. Without loss of generality, we can assume that $w(u)\le s$ for every $u\in\VV$. If not, then we can remove every $u\in\VV$ whose $w(u)>s$; this does not affect any solution since any vertex deleted can never be part of any solution. Let $\alpha_{\text{max}}=\max\{\alpha(u): u\in\VV\}$. We construct another instance $\II^\pr=\left(\GG=(\VV,\EE),(w(u))_{u\in\VV},(\alpha^\pr(u)=\left\lfloor \frac{n\alpha(u)}{\eps\alpha_{\text{max}}}\right\rfloor)_{u\in\VV},s\right)$ of \pa. We compute the optimal solution $\WW^\pr\subseteq\VV$ of $\II^\pr$ using the algorithm in \Cref{thm:treewidth-pa} and output $\WW^\pr$. Let $\WW\subseteq\VV$ be an optimal solution of $\II$. Clearly $\WW^\pr$ is a valid (may not be optimal) solution of \II also, since $w(\WW^\pr)\le s$ by the correctness of the algorithm in \Cref{thm:treewidth-pa}. We now prove the approximation factor of our algorithm.
\begin{align*}
     \sum_{u\in\WW^\pr}\alpha(u) &\ge \frac{\eps\alpha_{\text{max}}}{n}\sum_{u\in\WW^\pr} \left\lfloor \frac{n\alpha(u)}{\eps\alpha_{\text{max}}}\right\rfloor\\
    &\ge \frac{\eps\alpha_{\text{max}}}{n}\sum_{u\in\WW} \left\lfloor \frac{n\alpha(u)}{\eps\alpha_{\text{max}}}\right\rfloor&\text{[since $\WW^\pr$ is an optimal solution of $\II^\pr$]}\\
    &\ge \frac{\eps\alpha_{\text{max}}}{n}\sum_{u\in\WW} \left( \frac{n\alpha(u)}{\eps\alpha_{\text{max}}}-1\right)\\
    &\ge \left(\sum_{u\in\WW}\alpha(u)\right)-\eps\alpha_{\text{max}}\\
    &\ge \OPT(\II)-\eps\OPT(\II) & \text{[$\alpha_{\text{max}}\le\OPT(\II)$]}\\
    &=(1-\eps)\OPT(\II)
\end{align*}
Hence, the approximation factor of our algorithm is $(1-\eps)$. We now analyze the running time of our algorithm.

The value of any optimal solution of $\II^\pr$ is at most
\[\sum_{u\in\VV}\alpha^\pr(u) \le \frac{n}{\eps\alpha_{\text{max}}}\sum_{u\in\VV}\alpha(u)\le \frac{n}{\eps\alpha_{\text{max}}}\sum_{u\in\VV}\alpha_{\text{max}}=\frac{n^2}{\eps}. \]
Hence, the running time of our algorithm is $2^{\tw\log \tw}\cdot \text{poly}(n,1/\eps)$.
\end{proof}

\subsubsection{Other Parameters}

We next consider $vcs$, the maximum size of a minimum vertex cover of the subgraph induced by any solution of \pa, as our parameter. That is, $vcs(\II=(\GG,(w(u))_{u\in\VV},(\alpha(u))_{u\in\VV},s,d)) = \max\{\text{size of minimum vertex cover of }$ $W: W\subseteq\VV\text{ is a solution of }\II\}.$ We already know from \Cref{thm:pa-star-npc} that \pa is \NPC for star graphs. We note that $vcs$ is one for star graphs. Hence, \pa is \PNPH with respect to $vcs$, that is, there is no algorithm for \pa which runs in polynomial time even for constant values of $vsc$. However, whether there exists any algorithm with running time $\OO(f(vcs).\text{poly}(n,s,d))$, remains a valid question. We answer this question negatively in \Cref{thm:vcs-woh}. For that, we exhibit an \FPT-reduction from \pvc which is known to be \WOH parameterized by the size of the partial vertex cover we are looking for.

\begin{definition}[\pvc]
Given a graph $\GG=(\VV,\EE)$ and two integers $k$ and $\el$, compute if there exists a subset $\VV^\pr\subseteq\VV$ such that (i) $|\VV^\pr|\le k$ and (ii) there exist at least $\el$ edges whose one or both end points belong to $\VV^\pr$. We denote an arbitrary instance of \pvc by $(\GG,k,\el)$.
\end{definition}


\begin{theorem}\label{thm:vcs-woh}
There is no algorithm for \pa running in time $\OO(f(vcs).\text{poly}(n,s,d))$ unless \ETH fails.
\end{theorem}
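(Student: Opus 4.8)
The plan is to give a polynomial-time reduction from \pvc to \pa in which the produced parameter $vcs$ is bounded by a function of the \pvc parameter $k$. Since \pvc is \WOH parameterized by $k$, such a parameter-preserving reduction rules out any \FPT algorithm for \pa parameterized by $vcs$. In the instances I construct, $s$ and $d$ are polynomially bounded in the input size, so an algorithm running in time $\OO(f(vcs)\cdot\text{poly}(n,s,d))$ would be an \FPT algorithm in $vcs$; as \ETH implies $\mathsf{FPT}\neq\mathsf{W[1]}$, this would contradict \ETH, giving the theorem.

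The construction mirrors the one in \Cref{thm:pa-gen-npc}, except that the connectivity-providing path used there (whose minimum vertex cover grows linearly in $n$) is replaced by a single connector vertex; this substitution is the crucial idea for keeping $vcs$ small. Given a \pvc instance $(\GG=(\VV=\{v_i:i\in[n]\},\EE),k,\el)$, I would introduce a vertex $u_i$ for each $v_i$ with $w(u_i)=1,\alpha(u_i)=0$, a vertex $h_e$ for each edge $e\in\EE$ with $w(h_e)=0,\alpha(h_e)=1$, and a single connector vertex $c$ with $w(c)=\alpha(c)=0$. I would add the edge $\{u_i,h_e\}$ whenever $e$ is incident on $v_i$ in \GG, and the edge $\{c,u_i\}$ for every $i\in[n]$. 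Finally I would set $s=k$ and $d=\el$.

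The equivalence is then routine. In the forward direction, a set $\VV^\pr\subseteq\VV$ with $|\VV^\pr|\le k$ covering at least $\el$ edges yields the solution $\UU=\{c\}\cup\{u_i:v_i\in\VV^\pr\}\cup\{h_e:e\text{ is covered by }\VV^\pr\}$, whose weight is $|\VV^\pr|\le k=s$ and whose value is at least $\el=d$; it is connected because $c$ is adjacent to every included $u_i$ and each included $h_e$ is adjacent to at least one included $u_i$. In the backward direction, from a solution $\UU$ I would read off $\WW=\{v_i:u_i\in\UU\}$; the weight bound forces $|\WW|\le k$, the value bound forces at least $\el$ vertices $h_e$ to lie in $\UU$, and since each such $h_e$ is adjacent only to the $u_i$ corresponding to its two endpoints, connectivity of $\UU$ guarantees that one of those endpoints lies in $\WW$, so $\WW$ covers at least $\el$ edges.

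The only step requiring care, which I expect to be the main obstacle, is the parameter bound, and it is precisely here that the star-versus-path choice matters. For every solution $\UU$, each edge of $\GG^\pr[\UU]$ is of the form $\{c,u_i\}$ or $\{u_i,h_e\}$, so the vertices $h_e$ form an independent set and the set $\{u_i:u_i\in\UU\}\cup(\{c\}\cap\UU)$ is a vertex cover of $\GG^\pr[\UU]$ of size at most $w(\UU)+1\le s+1=k+1$. Hence the minimum vertex cover of every solution has size at most $k+1$, so $vcs\le k+1$, making the reduction an \FPT-reduction. Combining it with the fact that \pvc is \WOH and that \ETH implies $\mathsf{FPT}\neq\mathsf{W[1]}$ yields the claim; the essential point is that one must avoid any connectivity gadget whose induced minimum vertex cover grows with $n$, which is why a single connector is used in place of the path of \Cref{thm:pa-gen-npc}.
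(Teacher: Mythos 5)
Your proposal is correct and matches the paper's own proof essentially verbatim: the same reduction from \pvc with unit-weight vertex gadgets $u_i$, value-one edge gadgets $h_e$, a single zero-weight connector vertex replacing the path of \Cref{thm:pa-gen-npc}, the same equivalence argument, and the same bound $vcs \le k+1$ followed by the chain \WOH plus \ETH implying $\mathsf{FPT}\neq\mathsf{W[1]}$. Nothing further is needed.
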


\begin{proof}
At a high-level, our reduction from \pvc is similar to the reduction in \Cref{thm:pa-gen-npc}. The only difference is that we will now have only one ``global vertex'' instead of the ``global path'' in the proof of \Cref{thm:pa-gen-npc}. Formally, our reduction is as follows.

Let $(\GG=(\VV,\EE),k,\el)$ be an arbitrary instance of \pvc. We consider the following instance $(\GG^\pr=(\VV^\pr,\EE^\pr),(w(u))_{u\in\VV^\pr},(\alpha(u))_{u\in\VV^\pr},s,d)$ of \pa.
\begin{align*}
    &\VV^\pr = \{u_i: i\in[n]\} \cup \{h_e: e\in \EE\}\cup\{g\}\\
    &\EE^\pr = \{\{u_i,h_e\}: i\in[n], e \in\EE, e\text{ is incident on }v_i\text{ in }\GG\}
    \cup \{\{u_i,g\}: i\in[n]\}\\
    &w(u_i) = 1, \alpha(u_i)=0, \forall i\in[n], 
    w(h_e)=0, \alpha(h_e)=1, \forall e\in\EE,\\
    &w(g)=0, \alpha(g)=0, s=k, d=\el
\end{align*}
We claim that the two instances are equivalent.

In one direction, let us suppose that the \pvc instance is a \yes instance. Let $\WW\subseteq\VV$ covers at least \el edges in \GG and $|\WW|\le k$. We consider $\UU=\{u_i: i\in[n],v_i\in\WW\}\cup\{h_e: e\in\EE, e\text{ is covered by }\WW\}\cup\{g\}\subseteq \VV^\pr$. We claim that $\GG^\pr[\UU]$ is connected. Since there is an edge between $u_i$ and $g$ for every $i\in[n]$, the induced subgraph $\GG^\pr[\{u_i: i\in[n],v_i\in\WW\}\cup\{g\}]$ is connected. Also, since $h_e$ belongs to \UU only if \WW covers $e$ in \GG, the induced subgraph $\GG^\pr[\UU]$ is connected. Now we have $w(\UU)=\sum_{i=1}^n w(u_i)\mathbbm{1}(u_i\in\UU) + w(g) + \sum_{e\in\EE, \WW\text{ covers }e} w(h_e)=|\WW|\le k$. We also have $\alpha(\UU)=\sum_{i=1}^n \alpha(u_i)\mathbbm{1}(u_i\in\UU) + \alpha(g) + \sum_{e\in\EE, \WW\text{ covers }e} \alpha(h_e)\ge\el$. Hence, the \pa instance is also a \yes instance.

In the other direction, let us assume that the \pa instance is a \yes instance. Let $\UU\subseteq\VV^\pr$ be a solution of the \pa instance. We consider $\WW=\{v_i: i\in[n]: u_i\in\UU\}$. Since $s=k$, we have $|\WW|\le k$. Also, since $d=\el$, we have $|\{h_e: e\in\EE\}\cap\UU|\ge\el$. We claim that \WW covers every edge in $\{e: e\in\EE, h_e\in\UU\}$. Suppose not, then there exists an edge $e\in\{e: e\in\EE, h_e\in\UU\}$ which is not covered by \WW. Then none of the neighbors of $h_e$ belongs to \UU contradicting our assumption that \UU is a solution and thus $\GG^\pr[\UU]$ should be connected. Hence, \WW covers at least \el edges in \GG and thus the \pvc instance is a \yes instance.

We also observe that the size of a minimum vertex cover of the subgraph induced by any solution of the reduced \pa instance is at most $k+1$ --- the set consisting of the vertices in any solution of the \pa instance in $\GG^\pr$ which has weight $1$ and $g$ forms a vertex cover of the subgraph induced by the solution of the \pa instance. Also, all the numbers in our reduced \pa instance are at most the number of edges of the graph, and \pvc is \WOH parameterized by $k$. Hence, \pa is \WOH parameterized by the maximum size of the minimum vertex cover of any solution even when all the numbers are encoded in unary. Therefore, there is no algorithm for \pa running in time $\OO(f(vcs).\text{poly}(n,s,d))$ unless \ETH fails.
\end{proof}
\section{\pathknapsack}

We now present the results of \pathknapsack. We first show that \pathknapsack is strongly \NPC by reducing from \hp which is defined as follows.

\begin{definition}[\hp]
Given a graph $\GG(\VV,\EE)$ and two vertices $x$ and $y$, compute if there exists a path between $x$ and $y$ which visits every other vertex in \GG. We denote an arbitrary instance of \hp by $(\GG,x,y)$.
\end{definition}

\hp is known to be \NPC even for graphs with maximum degree three~\cite{DBLP:conf/stoc/GareyJS74}.

\begin{theorem}\label{thm:path-gen-npc}
\pathknapsack is strongly \NPC even for graphs with maximum degree three.
\end{theorem}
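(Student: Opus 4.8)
The plan is to reduce from \hp, which is \NPC even on graphs of maximum degree three, while keeping the underlying graph unchanged so that the degree bound is inherited for free. Membership in \NP is immediate: given a candidate subset $\UU$, one verifies in polynomial time that it forms a path between $x$ and $y$ and that $w(\UU)\le s$ and $\alpha(\UU)\ge d$.

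For hardness, given a \hp instance $(\GG=(\VV,\EE),x,y)$ with $|\VV|=n$ and maximum degree three, I would construct the \pathknapsack instance on the \emph{same} graph $\GG$ with the same terminals $x,y$, giving every vertex unit weight and unit value, i.e.\ $w(u)=\alpha(u)=1$ for all $u\in\VV$, and setting $s=n$ and $d=n$. Because the graph is untouched, its maximum degree stays three, and since every number is bounded by $n$, the reduction yields \emph{strong} \NP-hardness rather than merely weak hardness.

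Correctness hinges on the value target forcing the path to span $\VV$. Any simple path from $x$ to $y$ has at most $n$ vertices, so its total value is at most $n$; thus $\alpha(\UU)\ge d=n$ can hold only if $\UU=\VV$, that is, only if $\UU$ is the vertex set of a Hamiltonian path between $x$ and $y$. The weight bound $w(\UU)\le s=n$ is satisfied by every subset and is included solely to fit the problem's format. Spelling out both directions: if $(\GG,x,y)$ admits a Hamiltonian path from $x$ to $y$, its vertex set is a \pathknapsack solution of weight $n$ and value $n$; conversely, any \pathknapsack solution has value at least $n$, hence uses all $n$ vertices and, being a path from $x$ to $y$, is exactly a Hamiltonian path.

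The proof is technically light; the one point demanding care is the reading of the first condition in the definition of \pathknapsack. A Hamiltonian path selects a set of path-edges, yet $\GG$ restricted to $\VV$ may contain chords, so ``\UU is a path between $x$ and $y$'' must be understood as ``some simple path from $x$ to $y$ in $\GG$ has vertex set exactly $\UU$'' and not as ``$\GG[\UU]$ is an induced path''. Under this intended reading the two instances are equivalent, and strong \NP-completeness on maximum-degree-three graphs follows.
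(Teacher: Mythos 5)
Your proof is correct and follows essentially the same route as the paper: both reduce from \hp on the same maximum-degree-three graph with uniform vertex values and a value target of $d=n$ that forces the path to be Hamiltonian, the only (immaterial) difference being that you use unit weights with $s=n$ where the paper uses zero weights with $s=0$. Your closing remark on reading ``\UU is a path between $x$ and $y$'' as the vertex set of some simple path rather than an induced path matches the paper's intended interpretation.
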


\begin{proof}
\pathknapsack clearly belongs to \NP. To show \NP-hardness, we reduce from \hp. Let $(\GG,x,y)$ be an arbitrary instance of \hp. We consider the following instance $(\GG,(w(u))_{u\in\VV},(\alpha(u))_{u\in\VV},s,d,x,y)$ of \pathknapsack.
\[w(u)=0,\alpha(u)=1~\forall u\in\VV,s=0,d=n\]
We claim that the two instances are equivalent.

In one direction, let us assume that \hp is \yes. Let \PP be a Hamiltonian path between $x$ and $y$ in \GG. Then we have $w(\PP)=0\le s$ and $\alpha(\PP)=n\ge d$. Hence, the \pathknapsack instance is also a \yes instance.

In the other direction, let us assume that the \pathknapsack instance is a \yes instance. Let \PP be a path between $x$ and $y$ with $w(\PP)\le s=0$ and $\alpha(\PP)\ge d=n$. Since $\alpha(\VV)=n=\alpha(\PP)$ and there is no vertex with zero $\alpha$ value, \PP must be a Hamiltonian path between $x$ and $y$. Hence, the \hp instance is also a \yes instance.

We observe that all the numbers in our reduced \pathknapsack instance are at most the number of vertices in the graph. Hence, our reduction shows that \pathknapsack is strongly \NPC.
\end{proof}

However, \pathknapsack is clearly polynomial-time solvable for trees, since there exists only one path between every two vertices in any tree.

\begin{observation}\label{obs:pk-tree-poly}
\pathknapsack is polynomial-time solvable for trees.
\end{observation}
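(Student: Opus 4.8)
The plan is to exploit the defining property of trees used in the preceding remark: between any two vertices $x$ and $y$ in a tree there is a \emph{unique} path. This collapses the combinatorial search that makes \pathknapsack hard on general graphs (\Cref{thm:path-gen-npc}) into a trivial check, since there is no choice of path to optimize over.

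Concretely, I would proceed as follows. First, given an instance $(\GG,(w(u))_{u\in\VV},(\alpha(u))_{u\in\VV},s,d,x,y)$ of \pathknapsack where \GG is a tree, I would compute the unique $x$--$y$ path \PP. This can be done in $\OO(n)$ time by a single breadth-first or depth-first traversal from $x$, recording parent pointers, and then walking back from $y$ to $x$; the vertices encountered form \PP. Second, I would compute $w(\PP)=\sum_{u\in\PP} w(u)$ and $\alpha(\PP)=\sum_{u\in\PP} \alpha(u)$ in a single pass over the at most $n$ vertices of \PP. Third, I would output \yes if and only if $w(\PP)\le s$ and $\alpha(\PP)\ge d$.

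The correctness argument is immediate and is the whole content of the proof: any valid solution \UU must, by definition, be a path between $x$ and $y$ in \GG, and since \GG is a tree this path is unique and equals \PP. Hence \UU is a \yes-solution if and only if the unique path \PP satisfies both the weight constraint $w(\PP)\le s$ and the value constraint $\alpha(\PP)\ge d$, which is exactly what the algorithm checks. The total running time is $\OO(n)$, which is polynomial.

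There is essentially no obstacle here; the only point worth stating carefully is the uniqueness of the path, which is the defining combinatorial property of trees and is precisely what distinguishes this tractable case from the strongly \NPC general case. I would therefore keep the proof to a couple of sentences, emphasizing uniqueness and the trivial verification, rather than presenting pseudocode.
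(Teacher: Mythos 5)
Your proof is correct and matches the paper's own argument, which likewise rests entirely on the uniqueness of the $x$--$y$ path in a tree (the paper dispenses with the observation in one sentence for exactly this reason). Your explicit algorithmic details (linear-time path extraction and the two constraint checks) are a harmless elaboration of the same idea.
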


One immediate natural question is if \Cref{obs:pk-tree-poly} can be generalized to graphs of bounded treewidth. The following result refutes the existence of any such algorithm.

\begin{theorem}\label{thm:pk-pathwidth}
\pathknapsack is \NPC even for graphs of pathwidth at most two. In particular, \pathknapsack is \PNPH parameterized by pathwidth.
\end{theorem}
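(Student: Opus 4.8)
The plan is to show membership in \NP\ (a candidate subset can be checked in polynomial time for inducing an $x$-$y$ path and for satisfying the size and value bounds) and then to establish \NP-hardness by a reduction from \kp. The numerical hardness must come from the weights and values rather than from the graph structure, since \hp, used in \Cref{thm:path-gen-npc}, is polynomial-time solvable on bounded-pathwidth graphs; hence that reduction cannot be reused, and reducing from the weakly \NP-hard \kp is the natural choice (it also explains why only ordinary, not strong, \NP-completeness can hold here, consistently with the pseudo-polynomial algorithm of \Cref{thm:treewidth-path}).

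Given a \kp instance $(\XX=[n],(\theta_i)_{i\in\XX},(p_i)_{i\in\XX},b,q)$, I would build a ``chain of diamonds''. Introduce hub vertices $v_0,v_1,\ldots,v_n$ and, for each item $i\in[n]$, two internal vertices $a_i$ and $b_i$, with edges $\{v_{i-1},a_i\},\{a_i,v_i\},\{v_{i-1},b_i\},\{b_i,v_i\}$. Thus item $i$ is represented by a $4$-cycle $v_{i-1}\,a_i\,v_i\,b_i$ whose two internal routes encode the choice ``take item $i$'' (via $a_i$) or ``skip item $i$'' (via $b_i$). I set $w(a_i)=\theta_i,\ \alpha(a_i)=p_i$, $w(b_i)=\alpha(b_i)=0$, and $w(v_j)=\alpha(v_j)=0$ for all $j$; finally $x=v_0$, $y=v_n$, $s=b$, and $d=q$.

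The two crucial structural facts are as follows. First, the pathwidth is at most two: the decomposition using, in order, the consecutive bags $\{v_{i-1},a_i,b_i\}$ and $\{a_i,b_i,v_i\}$ for $i=1,\ldots,n$ covers every edge, keeps each vertex in a contiguous interval of bags, and has maximum bag size three. Second, since no two hubs are adjacent and the only connections across a diamond pass through $a_i$ or $b_i$, any induced $x$-$y$ path is forced to traverse the diamonds in the order $v_0,v_1,\ldots,v_n$ and, within each diamond, to use exactly one of $a_i,b_i$; moreover the resulting induced subgraph is a chordless path. Hence induced $x$-$y$ paths are in bijection with subsets $I\subseteq[n]$ (namely $I=\{i: a_i\text{ lies on the path}\}$), and along such a path $w(\UU)=\sum_{i\in I}\theta_i$ and $\alpha(\UU)=\sum_{i\in I}p_i$.

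Given this bijection the equivalence is immediate in both directions: a \kp solution $I$ with $\sum_{i\in I}\theta_i\le b$ and $\sum_{i\in I}p_i\ge q$ yields the $x$-$y$ path taking $a_i$ exactly for $i\in I$, and conversely any feasible \pathknapsack solution reads off such a set $I$. Since all constructed numbers equal those of the \kp instance and the graph has pathwidth at most two, this proves \pathknapsack is \NPC\ on graphs of pathwidth two; as the hardness holds for a fixed constant value of the parameter, \pathknapsack is \PNPH\ parameterized by pathwidth. The main point to verify carefully is the second structural claim---that every induced $x$-$y$ path decomposes into exactly one internal vertex per diamond with no shortcutting chords---since the correctness of the whole reduction rests on this forced ``binary choice'' behaviour.
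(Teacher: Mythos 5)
Your proposal is correct and is essentially identical to the paper's proof: both reduce from \kp via the same chain-of-diamonds gadget (your hubs $v_0,\ldots,v_n$ and choice vertices $a_i,b_i$ correspond exactly to the paper's $u_0,\ldots,u_n$ and $v_i,w_i$), with the same weight/value assignment, the same forced one-choice-per-diamond path structure, and the same width-two path decomposition. Your explicit verification of the forced traversal order and the bag sequence is slightly more detailed than the paper's, but the argument is the same.
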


\begin{proof}
We show a reduction from \kp. Let $(\XX=[n],(\theta_i)_{i\in\XX}, (p_i)_{i\in\XX}, b,q)$ be an arbitrary instance of \kp. We consider the following instance $(\GG=(\VV,\EE),(w(u))_{u\in\VV},(\alpha(u))_{u\in\VV},s,d,x,y)$ of \pathknapsack.
\begin{align*}
\VV &= \{u_0\}\cup\{u_i,v_i,w_i: i\in[n]\}\\
\EE &= \{\{u_i,v_{i+1}\},\{u_i,w_{i+1}\}:0\le i\le n-1\}\cup\{\{u_i,v_{i}\},\{u_i,w_{i}\}:i\in[n]\}\\
w(v_i)&=\theta_i, \alpha(v_i)=p_i ~\forall i\in[n], w(u_i)=\alpha(u_i)=0~\forall i\in[n]_0, w(w_i)=\alpha(w_i)=0~\forall i\in[n]\\
s &= b, d=q, x=u_0, y=u_n
\end{align*}
\Cref{fig:enter-label} shows a schematic diagram of the reduced instance. We now claim that the two instances are equivalent.

\begin{figure}
    \centering
    \includegraphics[width=\linewidth]{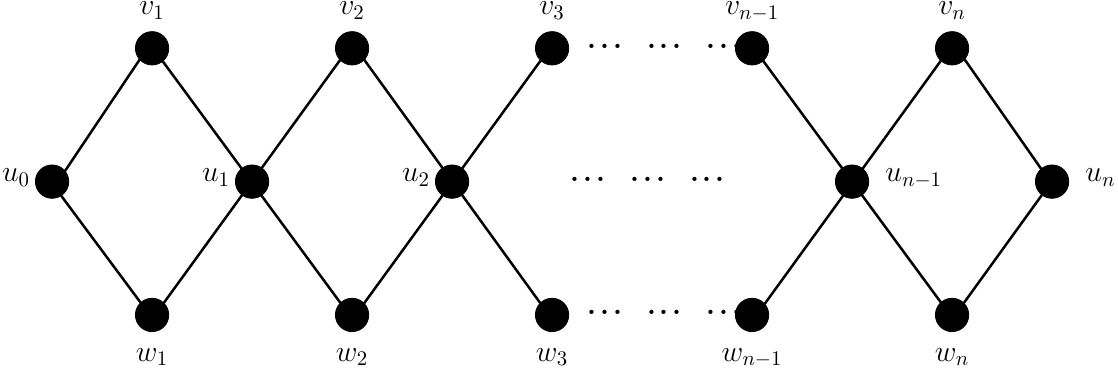}
    \caption{Reduced instance of \pathknapsack in \Cref{thm:pk-pathwidth}.}
    \label{fig:enter-label}
\end{figure}

In one direction, let us suppose that the \kp instance is a \yes instance. Let $\II\subseteq\XX$ be such that $\sum_{i\in\II}\theta_i\le b$ and $\sum_{i\in\II}p_i\ge q$. We now consider the path \PP from $u_0$ to $u_n$ consisting of $2n$ edges where the $2i$-th vertex (from $u_0$) is $v_i$ if $i\in\II$ and $w_i$ otherwise for every $i\in[n]$; the $(2i-1)$-th vertex (from $u_0$) is $u_{i-1}$ for every $i\in[n]$ and the last vertex is $u_n$. Clearly, we have $w(\PP)=\sum_{i\in\II}\theta_i\le b=s$ and $\alpha(\PP)=\sum_{i\in\II}p_i\ge q=d$. Hence, the \pathknapsack instance is a \yes instance.

In the other direction, let us assume that the \pathknapsack instance is a \yes instance. Let \PP be a path between $x$ and $y$ with $w(\PP)\le s=b$ and $\alpha(\PP)\ge d=q$. We observe that all paths from $u_0$ to $u_n$ in \GG have $2n$ edges; $(2i-1)$-th vertex is $u_{i-1}$ and $2i$-th vertex is either $v_i$ or $w_i$ for every $i\in[n]$ and the last vertex is $u_n$. We consider the set $\II=\{i\in[n]: \text{the $(2i-1)$-th vertex in $\PP$ is $v_i$}\}\subseteq\XX$. We now have $\sum_{i\in\II}\theta_i=w(\PP)\le s=b$ and $\sum_{i\in\II}p_i=\alpha(\PP)\ge d=q$. Hence, the \kp instance is a \yes instance.

We also observe that the pathwidth of \GG is at most two (there is a path decomposition with bag size being at most three) which proves the result.
\end{proof}

\Cref{thm:pk-pathwidth} leaves the following question open: does there exist an algorithm for \pathknapsack which runs in time $\OO(f(\tw)\cdot \text{poly}(n,s,d))$? We answer this question affirmatively in the following result. The dynamic programming algorithm is very similar to \Cref{thm:treewidth-pa}. 

\begin{theorem}\label{thm:treewidth-path}
There is an algorithm for \pathknapsack with running time $2^{\OO(\tw\log \tw)}\cdot n\cdot {\sf min}\{s^2,d^2\}$ where $\tw$ is the treewidth of the input graph, $s$ is the input size of the knapsack and $d$ is the input target value.
\end{theorem}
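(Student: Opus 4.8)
The plan is to reuse the dynamic-programming scheme of \Cref{thm:treewidth-pa} almost verbatim, augmenting each state with just enough information to certify that the selected subgraph is a simple path from $x$ to $y$ rather than an arbitrary connected subgraph. The structural fact I would lean on is that a subgraph $H$ is a simple $x$--$y$ path if and only if $H$ is connected, every vertex of $H$ has degree at most $2$, and $x,y$ are exactly the two vertices of degree $1$ with every other vertex of $H$ having degree $2$; such an $H$ is connected with $|V(H)|-1$ edges, hence a tree, and a tree of maximum degree $2$ with exactly two leaves is a path. Mirroring the \pa proof, I would take a nice edge tree decomposition $(\mathbb{T},\mathcal{X})$ and force both $x$ and $y$ into every bag (so $X_r = X_{\text{leaf}} = \{x,y\}$), which raises the width by at most $2$ and hence leaves the $2^{\OO(\tw\log\tw)}$ factor untouched.

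The state becomes $[t,\mathbb{P},\deg]$, where $t$ and the partition $\mathbb{P}=(P_0,P_1,\ldots,P_m)$ of $X_t$ are exactly as in \Cref{thm:treewidth-pa}, and $\deg$ records for each bag vertex $u\in X_t\setminus P_0$ its current degree $\deg(u)\in\{0,1,2\}$ inside the relevant subgraph of $\beta(t)$. As before, $D[t,\mathbb{P},\deg]$ stores the undominated feasible pairs $(w,\alpha)$. The node updates are those of \pa plus the following. At a leaf, $x$ and $y$ enter as their own singleton parts with degree $0$. At an introduce-vertex node the new vertex (never $x$ or $y$) enters with degree $0$, just as in \pa. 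The essential new work is at an introduce-edge node for $\{u,w\}$: besides the option of not using the edge (copy the child entry unchanged), we may use it only when $u$ and $w$ lie in \emph{different} parts of $\mathbb{P}$ --- using it inside one part would close a cycle and is forbidden --- in which case we merge the two parts and raise $\deg(u),\deg(w)$ each by one, discarding the result if either exceeds its cap ($1$ for $x,y$, and $2$ otherwise). At a forget-vertex node we keep the \pa rule forbidding the forgetting of the last bag vertex of a solution component, and we additionally require that a forgotten solution vertex (necessarily distinct from $x,y$, which are never forgotten) already has degree exactly $2$, since its degree can never change again. At a join node we combine entries of the two children as in \pa, now summing the degrees at each shared bag vertex and rejecting the combination if any cap is exceeded or if merging the two children's partitions would reconnect two bag vertices already jointly connected, thereby closing a cycle.

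The answer is read off $D[r,\mathbb{P}=(P_0=\emptyset,P_1=\{x,y\}),\deg]$ with $\deg(x)=\deg(y)=1$: this forces $x$ and $y$ into a single common component of degree $1$ each at the root, while the forget-node rule has guaranteed that every previously forgotten solution vertex has degree exactly $2$. By the structural fact above, every pair stored there certifies a genuine simple $x$--$y$ path; conversely, for an optimal path $\mathcal{U}$ the subsolution $\beta(t)\cap\mathcal{U}$ satisfies all invariants at every node, so its pair propagates up to the root. This is the direct analogue of the two directions of the \pa correctness argument, with the extra bookkeeping that the tracked degrees equal the true degrees in $\beta(t)$ and that acyclicity is preserved throughout.

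For the running time, attaching a degree in $\{0,1,2\}$ to each of the at most $\tw+1$ bag vertices multiplies the number of partitions by at most $3^{\tw+1}=2^{\OO(\tw)}$, which is still absorbed into $2^{\OO(\tw\log\tw)}$; the number of undominated pairs per state is still at most ${\sf min}\{s,d\}$, so each state is updated in $2^{\OO(\tw\log\tw)}\cdot{\sf min}\{s^2,d^2\}$ time, and summing over the $\OO(n\cdot\tw)$ nodes of $\mathbb{T}$ yields the claimed $2^{\OO(\tw\log\tw)}\cdot n\cdot{\sf min}\{s^2,d^2\}$ bound. I expect the main obstacle to be the cycle-avoidance reasoning: proving that the partition-plus-degree state is exactly sufficient to forbid every cycle at both the introduce-edge and join nodes while still admitting every genuine path, and checking that the degree caps interact correctly with the ``no forgetting of a singleton component'' rule inherited from \pa.
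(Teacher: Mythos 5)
Your proposal is correct and is essentially the proof the paper intends: the paper explicitly omits the details, stating only that the dynamic program is ``very similar'' to that of \Cref{thm:treewidth-pa}, and your adaptation --- forcing $x,y$ into all bags, augmenting the partition states with per-vertex degrees capped at $1$ for $x,y$ and $2$ otherwise, enforcing acyclicity at introduce-edge and join nodes, and requiring forgotten solution vertices to have degree exactly $2$ --- is the standard way to specialize that connectivity DP to simple $x$--$y$ paths, with the $3^{\tw+1}$ blow-up correctly absorbed into $2^{\OO(\tw\log\tw)}$. (One small simplification you could note: given the degree caps and the rule forbidding the forgetting of a component's last bag vertex, any component certified at the root with $\deg(x)=\deg(y)=1$ and all internal degrees $2$ is automatically a tree, hence a path, so the explicit cycle-avoidance checks are sound but redundant.)
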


Using the technique in \Cref{thm-fptas}, we use \Cref{thm:treewidth-path} as a black-box to obtain the following approximation algorithm. We again omit its proof due to its similarity with \Cref{thm-fptas}.

\begin{theorem}\label{thm-fptas-path}
There is an $(1-\eps)$ factor approximation algorithm for \pathknapsack for optimizing the value of the solution running in time $2^{\OO(\tw\log \tw)}\cdot \text{poly}(n,1/\eps)$ where \tw is the treewidth of the input graph.
\end{theorem}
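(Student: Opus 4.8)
The plan is to reuse the scaling argument of \Cref{thm-fptas} almost verbatim, replacing the exact pseudo-polynomial subroutine for \pa by the one for \pathknapsack from \Cref{thm:treewidth-path}. Given an instance $\II=(\GG,(w(u))_{u\in\VV},(\alpha(u))_{u\in\VV},s,x,y)$ in which we maximise $\alpha(\UU)$ over all $x$–$y$ paths \UU with $w(\UU)\le s$, I would set $\alpha_{\text{max}}=\max\{\alpha(u):u\in\VV\}$, build the rounded instance $\II^\pr$ with $\alpha^\pr(u)=\lfloor n\alpha(u)/(\eps\alpha_{\text{max}})\rfloor$ (leaving the weights, the endpoints, and $s$ unchanged), solve $\II^\pr$ optimally with \Cref{thm:treewidth-path} to obtain an $x$–$y$ path $\WW^\pr$, and output $\WW^\pr$. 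Since scaling touches only values, $\WW^\pr$ is automatically a weight-feasible $x$–$y$ path in \II, and the optimal $x$–$y$ path \WW of \II remains feasible for $\II^\pr$; hence the identical chain of inequalities from \Cref{thm-fptas} (each of the at most $n$ vertices on a path contributes rounding error at most $\eps\alpha_{\text{max}}/n$) yields $\alpha(\WW^\pr)\ge\OPT(\II)-\eps\alpha_{\text{max}}$, and the bound $\sum_u\alpha^\pr(u)\le n^2/\eps$ on the rounded optimum gives running time $2^{\OO(\tw\log\tw)}\cdot\text{poly}(n,1/\eps)$ via \Cref{thm:treewidth-path}.

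The one place where the argument genuinely diverges from \Cref{thm-fptas}, and which I expect to be the main obstacle, is the final inequality $\alpha_{\text{max}}\le\OPT(\II)$. For \pa this was immediate: after discarding vertices heavier than $s$, every single remaining vertex is itself a connected, weight-feasible solution, so $\OPT$ dominates the heaviest-valued vertex. For \pathknapsack a single vertex is \emph{not} an $x$–$y$ path, so the value-maximising vertex need not lie in any feasible solution at all, and $\alpha_{\text{max}}$ can exceed $\OPT(\II)$ by an unbounded factor. I would repair this with a preprocessing step: delete every vertex $u$ that lies on no $x$–$y$ path of weight at most $s$. No feasible solution can use such a vertex, and a feasible path consists solely of vertices lying on it, so this deletion changes neither the feasible set nor $\OPT(\II)$. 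After it, each surviving vertex $u$ sits on some weight-feasible $x$–$y$ path $P_u$, whence $\OPT(\II)\ge\alpha(P_u)\ge\alpha(u)$; taking the maximum over survivors restores $\alpha_{\text{max}}\le\OPT(\II)$ and simultaneously keeps $\alpha(u)\le\alpha_{\text{max}}$ for every surviving $u$, which is exactly what bounds the rounded optimum by $n^2/\eps$. If no vertex survives, equivalently no weight-feasible $x$–$y$ path exists, the instance is infeasible and we report so.

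It remains to carry out this preprocessing within the claimed budget. Deciding whether a fixed vertex $u$ lies on some $x$–$y$ path of weight at most $s$ reduces to computing the minimum-weight simple $x$–$y$ path constrained to pass through $u$ and comparing it with $s$; a connectivity-pattern dynamic program over the nice tree decomposition (forcing $x,u,y$ into the retained subgraph and minimising total vertex weight, in the same partition-based style as \Cref{thm:treewidth-pa}) solves this in $2^{\OO(\tw\log\tw)}\cdot\text{poly}(n)$ time, independent of the magnitude of $s$. Running it once per vertex keeps the preprocessing inside $2^{\OO(\tw\log\tw)}\cdot\text{poly}(n)$, which is absorbed into the overall bound. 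Combined with the scaling argument above, this delivers the stated $(1-\eps)$-approximation in time $2^{\OO(\tw\log\tw)}\cdot\text{poly}(n,1/\eps)$; the only real subtlety is recognising and closing the $\alpha_{\text{max}}\le\OPT$ gap, and everything else is a transcription of \Cref{thm-fptas}.
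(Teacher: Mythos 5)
Your proposal is correct, and it follows the route the paper intends: the paper gives no proof for \Cref{thm-fptas-path}, saying only that the technique of \Cref{thm-fptas} applies with \Cref{thm:treewidth-path} as the black box. Where you genuinely add something is in noticing that this transcription is not verbatim. The inequality $\alpha_{\text{max}}\le\OPT(\II)$ at the end of the chain in \Cref{thm-fptas} is justified for \pa because, once vertices of weight exceeding $s$ are discarded, every single remaining vertex is by itself a connected weight-feasible solution; for \pathknapsack a single vertex is not an $x$--$y$ path, a vertex of weight at most $s$ may still lie on no weight-feasible $x$--$y$ path, and so the naive adaptation only yields $\alpha(\WW^\pr)\ge\OPT(\II)-\eps\alpha_{\text{max}}$, which can be vacuous. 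Your repair --- deleting every vertex that lies on no $x$--$y$ path of weight at most $s$, which changes neither the feasible set nor $\OPT(\II)$ and restores $\alpha_{\text{max}}\le\OPT(\II)$ over the surviving vertices --- is exactly what is needed, and you correctly observe that this test must avoid pseudo-polynomial dependence on $s$ (the claimed bound is $\text{poly}(n,1/\eps)$, not $\text{poly}(s)$): your weight-minimizing forced-vertex path DP over the nice tree decomposition, with partition-based states as in \Cref{thm:treewidth-pa} and a single minimum weight kept per state, runs in $2^{\OO(\tw\log\tw)}\cdot\text{poly}(n)$ time, since it amounts to min-weight internally disjoint $x$--$u$ and $u$--$y$ paths, which is routine on bounded-treewidth graphs. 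The remainder --- weight-feasibility of $\WW^\pr$ in \II, the chain of inequalities with rounding error at most $\eps\alpha_{\text{max}}/n$ per vertex, and the bound $\sum_{u\in\VV}\alpha^\pr(u)\le n^2/\eps$ that keeps the call to \Cref{thm:treewidth-path} within $2^{\OO(\tw\log\tw)}\cdot\text{poly}(n,1/\eps)$ --- carries over exactly as you describe. The only degenerate case worth one sentence is $\alpha_{\text{max}}=0$ after preprocessing, where any surviving feasible path is optimal and no scaling is needed; otherwise your writeup supplies a rigor that the paper's one-line justification glosses over.
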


We next consider the size of the minimum vertex cover $vcs$ of the subgraph induced by a solution $\WW\subseteq\VV[\GG]$. We observe that the size of the minimum vertex cover of $\GG[\WW]$ is at least half of $|\WW|$ since there exists a Hamiltonian path in $\GG[\WW]$. Hence, it is enough to design an \FPT algorithm with parameter $|\WW|$. Our algorithm is based on the color coding technique~\cite{DBLP:books/sp/CyganFKLMPPS15}.

\begin{theorem}\label{thm:sol-path}
    There is an algorithm for \pathknapsack running in time $\OO\left((2e)^k k^{\OO(\log k)}n^{\OO(1)}\right)$ where $k$ is the number of vertices in the solution.
\end{theorem}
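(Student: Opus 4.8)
The plan is to use the color coding technique~\cite{DBLP:books/sp/CyganFKLMPPS15}. Since any feasible solution is a path on exactly $k$ vertices, it suffices to detect a simple $x$--$y$ path on $k$ vertices of total weight at most $s$ and total value at least $d$. As the solution size is not known a priori, I would run the detection routine for each candidate size $k'=2,3,\ldots$ in increasing order and stop at the first success; by a geometric-sum argument the total cost is then dominated by the term corresponding to the true solution size $k$, so it is enough to analyse the routine for a single fixed target size.

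First I would color the vertices with $k$ colors. Under a uniformly random coloring, a fixed solution path on $k$ vertices becomes \emph{colorful} (all its vertices receive pairwise distinct colors) with probability $k!/k^{k}\ge e^{-k}$. To derandomize, I would replace the random colorings by an $(n,k)$-perfect hash family (splitter) of size $e^{k}\,k^{\OO(\log k)}\log n$, which guarantees that the target path is colorful under at least one coloring of the family; this family is the source of the $e^{k}$ and $k^{\OO(\log k)}$ factors in the claimed bound.

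For a fixed coloring, I would run a dynamic program indexed by a vertex $v$ and a color subset $S\subseteq[k]$, whose cell stores the set of undominated (weight, value) pairs realizable by a colorful path from $x$ to $v$ that uses exactly the colors in $S$ (so $\mathrm{color}(v)\in S$ and the path has $|S|$ vertices). The base case is the single cell at $x$. A transition extends a colorful path ending at a neighbor $u$ of $v$ with color set $S'=S\setminus\{\mathrm{color}(v)\}$ along the edge $\{u,v\}$, adding $(w(v),\alpha(v))$ to each stored pair; I would then merge the contributions of all neighbors of $v$, discard any pair of weight exceeding $s$, and prune dominated pairs (where $(w',\alpha')$ dominates $(w,\alpha)$ if $w'\le w$ and $\alpha'\ge\alpha$). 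Because each vertex on a colorful path carries a distinct color, every path counted is automatically simple, so a pair in the cell $(y,[k])$ satisfying $w\le s$ and $\alpha\ge d$ certifies a genuine feasible solution. Correctness follows by induction on $|S|$: the stored undominated pairs are exactly the Pareto frontier of feasible colorful subpaths from $x$ ending at $v$ with color set $S$.

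For the running time, there are $e^{k}\,k^{\OO(\log k)}\log n$ colorings, and for each coloring there are $n\cdot 2^{k}$ cells, with each transition touching $\OO(n)$ neighbors; hence a single coloring costs $2^{k}\cdot n^{\OO(1)}$ times the maximum length of an undominated-pair list, and the overall bound $(2e)^{k}\,k^{\OO(\log k)}\,n^{\OO(1)}$ follows once these lists are polynomially bounded. The step I expect to be the main obstacle is precisely this two-dimensional feasibility bookkeeping: unlike a plain colorful-$k$-path search or the optimization of a single linear objective, each cell must carry the full Pareto frontier of (weight, value) pairs, and the delicate point is to argue that pruning to undominated pairs keeps these frontiers short enough that no dependence on the magnitudes $s$ and $d$ leaks into the running time. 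The remaining points---verifying the splitter guarantee, the simplicity of colorful paths, and that the increasing-$k'$ search telescopes to the stated bound---are routine.
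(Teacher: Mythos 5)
Your proposal follows the paper's proof essentially step for step: the same random $k$-coloring with success probability $k!/k^k\ge e^{-k}$, the same dynamic program indexed by an endpoint and a color subset whose cells store undominated $(w,\alpha)$ pairs with dominated pairs pruned, the same splitter/perfect-hash derandomization, and the same incremental search over the unknown solution size.

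The one step you flag as the main obstacle --- bounding the lengths of the undominated-pair lists --- is a genuine issue, but you should know that the paper does not resolve it either: its running-time analysis for \Cref{thm:sol-path} simply asserts a cost of $\OO\left(2^k n^{\OO(1)}\right)$ per coloring without ever discussing list lengths. And indeed your worry is well-founded, because the frontiers cannot be bounded by $n^{\OO(1)}$ in general: taking vertex weights to be distinct powers of two and $\alpha(v)=w(v)$, distinct colorful paths produce distinct, pairwise-incomparable pairs, so absent truncation a cell can hold $n^{\Theta(k)}$ undominated pairs. The correct resolution is the integrality argument the paper makes explicit in its treewidth DP (\Cref{thm:treewidth-pa}): among undominated pairs, weights are pairwise distinct and values are pairwise distinct; since pairs with $w>s$ are discarded and at most one pair with $\alpha\ge d$ need be retained, every list has length at most $\min\{s,d\}+1$. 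With that substituted for your hoped-for polynomial bound, your DP is correct and runs in time $\OO\left((2e)^k k^{\OO(\log k)} n^{\OO(1)}\min\{s^2,d^2\}\right)$ --- pseudo-polynomial in the knapsack data, which is what the paper's argument actually establishes; the $s,d$-free bound in the theorem statement should be read as suppressing this factor (consistent with the unary-encoding regime in which the surrounding strong \NP-hardness results are phrased), exactly as the analogous factor appears explicitly in \Cref{thm:treewidth-pa}. Apart from this shared gap, which you correctly identified rather than papering over, your write-up matches the paper's proof.
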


\begin{proof}
Let $(\GG=(\VV,\EE),(w(u))_{u\in\VV},(\alpha(u))_{u\in\VV},s,d,x^\pr,y^\pr)$ be an arbitrary instance of \pathknapsack and $k$ the number of vertices in the solution. We can assume without loss of generality that we know $k$ since there are only $n-1$ possible values of $k$ namely $2,3,\ldots,n$ --- we can simply run the algorithm starting from $k=1$, incrementing $k$ in every iteration, till we obtain a solution which we output. We color every vertex uniformly randomly from a palette of $k$ colors independent of everything else. Let $\chi:\VV\longrightarrow[k]$ be the resulting coloring. For every non-empty subset $S\subseteq[k],S\ne\emptyset$ and vertex $x\in\VV$, we define a boolean variable PATH$(S,x)$ to be \true if there is a path which starts from $x^\pr$, ends at $x$, and contains exactly one vertex of every color in $S$; we call such a path $S$-colorful. If PATH$(S,x)$ is \true, then we also define $D[S,x]=\{(w,\alpha):\exists\text{ an $x^\pr$ to $x$ $S$-colorful path \PP such that }w(\PP)=w, \alpha(\PP)=\alpha\text{ for every other $x^\pr$ to $x$ $S$-colorful path \QQ, we have either }w(\QQ)>w \text{ or }\alpha(\QQ)<\alpha\}$. For $|S|=1$, we note that PATH$(S,x^\pr)$=\true if and only if $\{\chi(x^\pr)\}=S$; PATH$(S,x)$=\false for every $x\in\VV\setminus\{x^\pr\}$ and $|S|=1$; $D[S,x^\pr]=\{(w(x^\pr),\alpha(x^\pr))\}$ if $\{\chi(x^\pr)\}=S$; $D[S,x]=\emptyset$ for every $x\in\VV\setminus\{x^\pr\}$ and $S\subseteq\VV$. We update $\text{PATH}(S,x)$ as per the following recurrence for $|S|>1$.
    \[
    \text{PATH}(S,x) = 
    \begin{cases}
        \bigvee\{\text{PATH}(S\setminus\{\chi(x)\},v): \{v,x\}\in\EE\} & \text{if } \chi(x)\in S\\
        \false & \text{otherwise}
    \end{cases}
    \]
    When we update any $\text{PATH}(S,x)$ to be \true, we update $D[S,x]$ as follows. For every $\{x,y\}\in\EE$ if $\text{PATH}(S\setminus\{\chi(x)\},y)$ is \true, then we do the following: for every $(w,\alpha)\in D[S\setminus\{\chi(x)\},y]$, we put $(w+w(x),\alpha+\alpha(x))$ in $D[S,x]$ if $w+w(x)\le s$; and we finally remove all dominated pairs from $D[S,x]$. We output \yes if $\text{PATH}([k],y^\pr)$ is \true and there exists a $(w,\alpha)\in D[[k],y^\pr]$ such that $w\le s$ and $\alpha\ge d$. Otherwise, we output \no.

    {\bf Proof of correctness:} If there does not exist any colorful path between $x$ and $y$ of length $k$, then the algorithm clearly outputs \no. Suppose now that the instance is a \yes instance. Then there exists a colorful path $\PP=(u_1(=x),u_2,\ldots,u_k(=y))$ between $x$ and $y$ such that $w(\PP)\le s$ and $\alpha(\PP)\ge d$. Let us define $S_i=\{u_j:j\in[i]\}$ for $i\in[k]$. Then $\text{PATH}(S_i,u_i)$ is \true and $D[S_i,u_i]$ either contains $(w(S_i),\alpha(S_i))$ or any pair which dominates $(w(S_i),\alpha(S_i))$ for every $i\in[k]$. Hence, the algorithm outputs \yes.

    {\bf Running time analysis:} If there exists a colorful path between $x$ and $y$, then the algorithm finds it in time $\OO\left(2^k n^{\OO(1)}\right)$. If there exists a path between $x$ and $y$ containing $k-1$ edges (that is, $k$ vertices including $x$ and $y$), then one such path becomes colorful in the random coloring with probability at least
    \[ \frac{k!}{k^k}\ge e^{-k}. \]
    Hence, by repeating $\OO(e^k)$ times and outputting \yes if any run of the algorithm outputs \yes, the above algorithm achieves a success probability of at least $2/3$. We can use the splitters to derandomize the algorithm above to obtain a deterministic algorithm for \pathknapsack which runs in time $\OO\left((2e)^k k^{\OO(\log k)}n^{\OO(1)}\right)$~\cite[Section 5.6.2]{DBLP:books/sp/CyganFKLMPPS15}.
\end{proof}

\Cref{thm:sol-path} immediately implies the following result.

\begin{corollary}\label{cor:vcs-path}
    There is an algorithm for \pathknapsack running in time $\OO\left((2e)^{2vcs} vcs^{\OO(\log vcs)}n^{\OO(1)}\right)$ where $vcs$ is the size of the minimum vertex cover of the subgraph induced by the solution.
\end{corollary}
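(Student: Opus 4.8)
The plan is to derive this corollary directly from \Cref{thm:sol-path} by bounding the number of vertices $k$ in any solution in terms of $vcs$. The key structural observation, already noted in the text preceding \Cref{thm:sol-path}, is that any solution $\WW$ of \pathknapsack is by definition a path between $x$ and $y$, so the induced subgraph $\GG[\WW]$ contains a Hamiltonian path on all $|\WW| = k$ of its vertices, namely the solution path itself.

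First I would make the inequality $vcs \ge \lfloor k/2\rfloor$ precise. Since $\GG[\WW]$ contains the path $P_k$ on its $k$ vertices as a subgraph, any vertex cover of $\GG[\WW]$ must in particular cover all $k-1$ edges of $P_k$, and hence restricts to a vertex cover of $P_k$. As the minimum vertex cover of $P_k$ has size exactly $\lfloor k/2\rfloor$, the minimum vertex cover of $\GG[\WW]$ has size at least $\lfloor k/2\rfloor$. Rearranging $\lfloor k/2\rfloor \le vcs$ gives $k \le 2\cdot vcs + 1 = \OO(vcs)$.

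With this bound in hand I would invoke \Cref{thm:sol-path}, whose running time is $\OO\left((2e)^k k^{\OO(\log k)} n^{\OO(1)}\right)$. Importantly, $vcs$ need not be known in advance: the algorithm of \Cref{thm:sol-path} already searches over $k = 2,3,\ldots$ in increasing order and halts on finding a solution, so on a \yes instance it terminates with a solution whose vertex count satisfies $k \le 2\cdot vcs + 1$. Substituting $k = \OO(vcs)$ yields $(2e)^k = \OO\!\left((2e)^{2vcs}\right)$ (the extra factor $(2e)^1$ being absorbed into the constant), $k^{\OO(\log k)} = vcs^{\OO(\log vcs)}$, and therefore the overall running time $\OO\left((2e)^{2vcs} vcs^{\OO(\log vcs)} n^{\OO(1)}\right)$ claimed. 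Since the argument reduces to a routine substitution once the vertex-cover bound is established, there is no substantial obstacle; the only point requiring care is the direction of the inequality — we need a lower bound on $vcs$ in order to upper-bound $k$, which is exactly what the Hamiltonian-path structure of $\GG[\WW]$ supplies.
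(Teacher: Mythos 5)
Your proposal is correct and follows exactly the paper's route: the paper derives \Cref{cor:vcs-path} from \Cref{thm:sol-path} via the same observation that the solution path is a Hamiltonian path in $\GG[\WW]$, so the minimum vertex cover has size at least $\lfloor k/2\rfloor$, giving $k\le 2\cdot vcs+1$ and the claimed running time by substitution. Your version is in fact slightly more careful than the paper's (making the floor precise and noting that $vcs$ need not be known in advance because the algorithm iterates over increasing $k$), but the argument is the same.
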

\section{\shortestpathknapsack}

We now consider \shortestpathknapsack. We observe that all the $u$ to $v$ paths in the reduced instance of \pathknapsack in the proof of \Cref{thm:pk-pathwidth} are of the same length. Hence, we immediately obtain the following result as a corollary of \Cref{thm:pk-pathwidth}.

\begin{corollary}\label{cor:shortest-pathwidth}
\shortestpathknapsack is \NPC even for graphs of pathwidth at most two and the weight of every edge is one. In particular, \shortestpathknapsack is \PNPH parameterized by pathwidth.
\end{corollary}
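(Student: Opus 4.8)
The plan is to reuse verbatim the reduction from \kp constructed in the proof of \Cref{thm:pk-pathwidth}, and simply equip it with the unit edge-cost function. Concretely, starting from a \kp instance $(\XX=[n],(\theta_i)_{i\in\XX},(p_i)_{i\in\XX},b,q)$, I would build the same graph $\GG=(\VV,\EE)$ with $\VV=\{u_0\}\cup\{u_i,v_i,w_i:i\in[n]\}$, the same vertex weights and values, the same terminals $x=u_0$, $y=u_n$, and the same $s=b$, $d=q$, and additionally define $c(e)=1$ for every $e\in\EE$. This produces a legitimate instance of \shortestpathknapsack.

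The crucial structural fact, already recorded inside the proof of \Cref{thm:pk-pathwidth}, is that every $u_0$-$u_n$ path in \GG uses exactly $2n$ edges: the $(2i-1)$-th vertex is forced to be $u_{i-1}$ and the $2i$-th vertex is either $v_i$ or $w_i$. With unit edge costs this means every $u_0$-$u_n$ path has cost exactly $2n$, so every such path is a shortest path. Consequently the shortest-path constraint is vacuous on this family of instances: a subset \UU is a shortest $u_0$-$u_n$ path if and only if it is a $u_0$-$u_n$ path. Hence on these instances \shortestpathknapsack coincides exactly with \pathknapsack, and the equivalence with the original \kp instance is inherited unchanged from \Cref{thm:pk-pathwidth}. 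Membership in \NP is immediate, since a vertex subset certificate can be checked for the shortest-path, weight, and value conditions in polynomial time, and the pathwidth-at-most-two bound carries over because the underlying graph is identical.

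The only point needing a line of care is the definitional requirement in \shortestpathknapsack that the vertex weights be positive, whereas the \kp reduction assigns $w(u_i)=w(w_i)=0$ (and possibly $w(v_i)=0$). Since every candidate path has the same number of vertices, namely $2n+1$, this is harmless: I would shift each vertex weight by $1$ and compensate by replacing $s=b$ with $s=b+2n+1$, so that the weight of every $u_0$-$u_n$ path increases by the constant $2n+1$; the size constraint $w(\UU)\le s$ then reduces exactly to $\sum_{i\in\II}\theta_i\le b$, preserving the equivalence. I expect no genuine obstacle here; the entire content of the corollary is the observation that unit edge weights force all terminal-to-terminal paths to be shortest paths, after which \Cref{thm:pk-pathwidth} does all the work. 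The ``in particular'' clause about \PNPH hardness parameterized by pathwidth then follows exactly as in \Cref{thm:pk-pathwidth}, since the pathwidth is the fixed constant two, independent of the instance size.
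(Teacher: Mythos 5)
Your proposal is correct and matches the paper's own argument: the paper likewise observes that all $u_0$-to-$u_n$ paths in the reduced instance of \Cref{thm:pk-pathwidth} have the same length (so with unit edge costs every such path is a shortest path, making the shortest-path constraint vacuous) and obtains the corollary immediately. Your additional weight-shifting step to satisfy the positivity requirement on vertex weights in the definition of \shortestpathknapsack is a detail the paper silently glosses over, and your fix is valid precisely because every candidate path has exactly $2n+1$ vertices.
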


Interestingly, \shortestpathknapsack admits a pseudo-polynomial-time algorithm for any graph unlike \pa and \pathknapsack.

\begin{theorem}\label{thm:shortest-algo}
There is an algorithm for \shortestpathknapsack running in time $\OO((m+n\log n)\cdot\min\{s^2,(\alpha(\VV))^2\})$, where $m$ is the number of edges in the input graph.
\end{theorem}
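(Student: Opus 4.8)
The plan is to exploit the rigid structure of shortest paths: after a single-source shortest-path computation from $x$, the collection of all shortest $x$--$y$ paths lives inside an acyclic ``tight-edge'' subgraph, over which a Pareto-frontier dynamic program can be run in topological (that is, distance) order.

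First I would run Dijkstra's algorithm from $x$ on $(\GG,c)$ to compute ${\sf dist}_\GG(x,v)$ for every $v\in\VV$; this is the source of the $\OO(m+n\log n)$ factor. I then orient each edge $\{u,v\}\in\EE$ as the arc $(u,v)$ whenever it is \emph{tight}, i.e.\ ${\sf dist}_\GG(x,u)+c(u,v)={\sf dist}_\GG(x,v)$, and call the resulting digraph $D$. The key structural claim, which I would prove first, is that a set $\UU$ is the vertex set of a shortest $x$--$y$ path in $\GG$ if and only if it is the vertex set of a directed $x$--$y$ path in $D$: every directed $x$--$y$ path in $D$ telescopes to total edge-cost ${\sf dist}_\GG(x,y)$ and is therefore shortest, and conversely every shortest path uses only tight edges. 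Because arcs of $D$ strictly increase ${\sf dist}_\GG(x,\cdot)$ when edge costs are positive, $D$ is acyclic and the distance ordering is a topological order; the only subtlety is zero-cost edges, which I would handle by orienting tight edges between equidistant endpoints consistently with the order in which Dijkstra settles vertices, so that $D$ remains acyclic.

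Next I would set up the dynamic program. For each vertex $v$ I maintain $D[v]$, the set of undominated pairs $(w,\alpha)$ realized by shortest $x$--$v$ paths, where $w$ is the total vertex-weight and $\alpha$ the total vertex-value of the path (a pair $(w,\alpha)$ dominating $(w',\alpha')$ when $w\le w'$ and $\alpha\ge\alpha'$). Initialize $D[x]=\{(w(x),\alpha(x))\}$, process the vertices in increasing order of ${\sf dist}_\GG(x,\cdot)$, and for each $v$ form the candidate set by taking, over all in-arcs $(u,v)$ of $D$, every pair $(w,\alpha)\in D[u]$ with $w+w(v)\le s$ and shifting it to $(w+w(v),\alpha+\alpha(v))$; then discard dominated pairs. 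Correctness follows by induction on the distance ordering exactly as in \Cref{thm:treewidth-pa}: paths are automatically simple since $D$ is acyclic, so no vertex is double-counted. The answer is \yes iff $D[y]$ contains a pair with $w\le s$ and $\alpha\ge d$.

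For the running time, the crucial observation is that each $D[v]$ is a Pareto frontier: since vertex weights and values are integers, there is at most one undominated pair per distinct weight in $\{0,\dots,s\}$ and at most one per distinct value in $\{0,\dots,\alpha(\VV)\}$, so $|D[v]|\le\min\{s,\alpha(\VV)\}+1$. Merging two such frontiers and pruning costs $\OO(\min\{s^2,(\alpha(\VV))^2\})$ in the worst case, and the number of merge operations is bounded by the number of arc relaxations plus priority-queue operations of the Dijkstra-ordered sweep, i.e.\ $\OO(m+n\log n)$; multiplying yields the claimed $\OO((m+n\log n)\cdot\min\{s^2,(\alpha(\VV))^2\})$ bound. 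The main obstacle I anticipate is not the DP itself but the two correctness points above: proving the tight-edge characterization cleanly and guaranteeing that $D$ is genuinely acyclic in the presence of zero-cost edges, since otherwise the topological-order DP (and the guarantee that shortest paths are simple) breaks down.
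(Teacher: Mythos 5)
Your algorithm is, at its core, the same as the paper's: the paper runs a single Dijkstra-style sweep in which the Pareto frontier $D_v$ of undominated $(w,\alpha)$ pairs is updated at each edge relaxation (shifted pairs are copied from $D_z$ into $D_u$ exactly when $\delta_u=\delta_z+c(\{z,u\})$, and $D_u$ is reset when its distance improves), whereas you decouple the two phases, computing distances first and then sweeping the tight-edge DAG in distance order. For positive edge costs these are the same computation, with the same frontier bound $|D[v]|\le\min\{s,\alpha(\VV)\}+1$ and the same running-time accounting, so that part of your proposal is sound and matches the paper.

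The genuine flaw is your treatment of zero-cost edges. Orienting a tight zero-cost edge between equidistant vertices by Dijkstra's settle order does not preserve your structural claim: take vertices $x,a,b,p,q,y$ with unit-cost edges $\{x,a\},\{x,b\},\{a,p\},\{b,q\},\{p,y\},\{q,y\}$ and a zero-cost edge $\{a,b\}$. Then $(x,b,a,p,y)$ and $(x,a,b,q,y)$ are both shortest $x$--$y$ paths and traverse $\{a,b\}$ in opposite directions, so any fixed acyclic orientation loses one of the two vertex sets; if $a$ settles first (arc $a\to b$), setting $\alpha(b)=\alpha(p)=1$, all other values $0$, and $d=2$ makes your algorithm answer \no on a \yes instance, and a symmetric assignment defeats the other tie-break. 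In fact no repair is possible: if all edge costs are zero, every simple $x$--$y$ path is a shortest path, so \shortestpathknapsack degenerates to \pathknapsack (pad vertex weights to $1$ to meet the positivity requirement), which is strongly \NPC by \Cref{thm:path-gen-npc}; a pseudo-polynomial algorithm tolerating zero-cost edges would therefore imply $\Pb=\NP$. The paper implicitly sidesteps this by assuming positive edge costs --- its correctness argument explicitly invokes the positivity of edge weights when arguing that a shortest path to an already-settled vertex cannot pass through the newly settled one. So you should state the positivity assumption up front and delete the settle-order fix; with that change your proof is complete and essentially identical to the paper's.
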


\begin{proof}
Let $(\GG=(\VV,\EE,(c(e))_{e\in\EE}),(w(u))_{u\in\VV},(\alpha(u))_{u\in\VV},s,d,x,y)$ be an arbitrary instance of \shortestpathknapsack. We design a greedy and dynamic-programming based algorithm. For every vertex $v\in \VV$, we store a boolean marker $b_v$, the distance $\delta_v$ of $v$ from $x$, and a set $D_v=\{(w,\alpha):\exists$ an $x$ to $v$ shortest-path \PP such that $w(\PP)=w, \alpha(\PP)=\alpha,$ and for every other $x$ to $v$ shortest-path \QQ, we have either $w(\QQ)>w$ or $\alpha(\QQ)<\alpha \text{ (or both)}\}$. That is, we store undominated weight-value pairs of all shortest $x$ to $v$ paths in $D_v$. We initialize $b_x=\false, \delta_x=0, D_x=\{(w(x),\alpha(x))\}, b_u=\false, \delta_u=\infty,$ and $ D_u=\emptyset$ for every $u\in\VV\setminus\{x\}$.

Updating DP table: We pick a vertex $z=\argmin_{v\in\VV:b_v=\false}\delta_v$. We set $b_z=\true$. For every neighbor $u$ of $z$, if $\delta_u>\delta_z+c(\{z,u\})$, then we reset $D_u=\emptyset$ and set $\delta_u=\delta_z+c(\{z,u\})$. If $\delta_u=\delta_z+c(\{z,u\})$, then update $D_u$ as follows: for every $(w,\alpha)\in D_z$, we update 
$D_u$ to $(D_u\cup\{(w+w(u),\alpha+\alpha(u))\})$ if $w+w(u)\le s$. We remove all dominated pairs from $D_u$ just before finishing each iteration. If we have $b_v=\true$ for every vertex, then we output \yes if there exists a pair $(w,\alpha)\in D_y$ such that $w\le s$ and $\alpha\ge d$. Else, we output \no.

We now argue the correctness of our algorithm. Following the proof of correctness of the classical Dijkstra's shortest path algorithm, we observe that if $b_v$ is $\true$ for any vertex $v\in\VV$, its distance from $x$ is $\delta_v$~\cite{DBLP:books/daglib/0023376}. We claim that at the end of updating a table entry in every iteration, the following invariant holds: for every vertex $v\in\VV$ such that $b_v=\true$, $(k_1,k_2)\in D_v$ if and only if there exists an $x$ to $v$ undominated shortest path $\PP$ using only vertices marked $\true$ such that $w(\PP)=k_1$ and $\alpha(\PP)=k_2$.

The invariant clearly holds after the first iteration. Let us assume that the invariant holds after $i\;(>\!\!\!\!1)$ iterations; $\VV_T$ be the set of vertices which are marked $\true$ after $i$ iterations. We have $|\VV_T|=i>1$. Suppose the algorithm picks the vertex $z_{i+1}$ in the $(i+1)$-th iteration; that is, we have $z_{i+1}=\argmin_{v\in\VV:b_v=\false}\delta_v$ when we start the $(i+1)$-th iteration. Let $\PP^*=x,\ldots,z,z_{i+1}$ be an undominated $x$ to $z_{i+1}$ shortest path using the vertices marked $\true$ only. Then we need to show that $(w(\PP^*),\alpha(\PP^*))\in D_{z_{i+1}}$. We claim that the prefix of the path $\PP^*$ from $x$ to $z$, let us call it $\QQ=x,\ldots,z$, is an undominated $x$ to $z$ shortest path using the vertices marked \true only. It follows from the standard proof of correctness of Dijkstra's algorithm~\cite{DBLP:books/daglib/0023376} that $\QQ$ is a shortest path from $x$ to $z$. Now, to show that $\QQ$ is undominated, let us assume that another shortest path $\RR$ from $x$ to $z$ dominates $\QQ$. Then the shortest path $\RR^\pr$ from $x$ to $z_{i+1}$ which is $\RR$ followed by $z_{i+1}$ also dominates $\PP^*$ contradicting our assumption that $\PP^*$ is an undominated shortest path from $x$ to $z_{i+1}$. We now observe that the iteration $j$ when the vertex $z$ was marked \true must be less than $(i+1)$ since $z$ is already marked \true in the $(i+1)$-th iteration. Now, applying induction hypothesis after the $j$-th iteration, we have $(w(\QQ),\alpha(\QQ))\in D_z$ and $(w(\PP^*),\alpha(\PP^*))\in D_{z_{i+1}}$. Also, at the end of the $j$-th iteration, the $\delta_z$ and $\delta_{z_{i+1}}$ values are set to the distances of $z$ and $z_{i+1}$ from $x$ respectively. Thus, the $D_z$ and $D_{z_{i+1}}$ are never reset to $\emptyset$ after $j$-th iteration. Also, we never remove any undominated pairs from DP tables. Since $\PP^*$ is an undominated $x$ to $z_{i+1}$ path, we always have $(w(\PP^*),\alpha(\PP^*))\in D_{z_{i+1}}$ from the end of $j$-th iteration and thus, in particular, after $(i+1)$-th iteration. Hence, invariant (i) holds for $z_{i+1}$ after $(i+1)$-th iteration. Now consider any vertex $z^\pr$ other than $z_{i+1}$ which is marked \true before $(i+1)$-th iteration. Let $\PP_1=x,\ldots,z^\pr$ be an undominated $x$ to $z^\pr$ shortest path using the vertices marked \true only. If $\PP_1$ does not pass through the vertex $z_{i+1}$, then we have $(w(\PP_1),\alpha(\PP_1))\in D_{z_{z^\pr}}$ by induction hypothesis after $i$-iterations. If $\PP_1$ passes through $z_{i+1}$, we have $\delta_{z_{i+1}}<\delta_{z^\pr}$ since all the edge weights are positive. However, this contradicts our assumption that $z^\pr$ had already been picked by the algorithm and marked \true before $z_{i+1}$ was picked. Hence, $\PP_1$ cannot use $z_{i+1}$ as an intermediate vertex. Hence, after $(i+1)$ iterations, for every vertex $v\in\VV$ such that $b_v=\true$, for every $x$ to $v$ shortest path \PP using only vertices marked \true, we have $(w(\PP),\alpha(\PP))\in D_v$.

For the other direction, let $(k_1,k_2)$ be an arbitrary pair in $D_{z_{i+1}}$ after $(i+1)$ iterations. Let $j$ be the iteration when $(k_1,k_2)$ was first included in $D_{z_{i+1}}$; $z^\pr$ be the vertex that the algorithm picked and marked \true in the $j$-th iteration. Since $(k_1,k_2)$ was included in $D_{z_{i+1}}$ in the $j$-th iteration, we must have an edge between $z^\pr$ and $z_{i+1}$. From the proof of correctness of Dijkstra's algorithm, we observe that, after the end of the $j$-th iteration, $\delta_{z_{i+1}}$ is $\delta_{z^\pr}+c(\{z^\pr,z_{i+1}\})$ which is actually the distance of $z_{i+1}$ from $x$ and was thus never decreased after $j$-th iteration of the algorithm. Hence, $D_{z_{i+1}}$ is never reset to $\emptyset$ after the $j$-th iteration and thus the pair $(k_1,k_2)$ remains in $D_{z_{i+1}}$ from $j$-th iteration till $(i+1)$-th iteration. Also, it follows from the proof of correctness of Dijkstra's algorithm that there is a shortest path from $x$ to $z_{i+1}$ using vertices marked \true in the first $j$ iterations only. In particular, there exists an undominated shortest path $\PP_1$ from $x$ to $z_{i+1}$ using vertices marked \true in the first $(i+1)$ iterations only with $w(\PP_1)=k_1$ and $\alpha(\PP_1)=k_2$. Hence, the invariant holds for $z_{i+1}$ after $(i+1)$ iterations. For every other vertex $z^\prr\in\VV$ such that $z^\prr$ is marked \true, we have already argued before that we cannot have a shortest $x$ to $z^\prr$ path using $z_{i+1}$ as an intermediate vertex. Thus, the invariant holds for $z^\prr$ thanks to induction hypothesis.

For every vertex $v\in\VV$, the cardinality of $D_v$ is at most $s$ and also at most $\alpha(\VV)$ and thus at most $\min\{s,\alpha(\VV)\}$. Implementing the above algorithm using a standard Fibonacci heap-based priority queue to find $\argmin_{v\in\VV:b_v=\false}\delta_v$ gives us a running time $\OO((m+n\log n)\cdot\min\{s^2,(\alpha(\VV))^2\})$ where $m$ is the number of edges in the graph.
\end{proof}

Clearly, \shortestpathknapsack admits a polynomial-time algorithm for trees since only one path exists between every two vertices.

\begin{observation}\label{obs:shortest-tree-poly}
\shortestpathknapsack is in \Pb for trees.
\end{observation}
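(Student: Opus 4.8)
The plan is to exploit the defining property of trees: between any two vertices there is a \emph{unique} simple path. First I would observe that in the input tree \GG, for the two distinguished vertices $x$ and $y$ there is exactly one $x$--$y$ path \PP, and this path is trivially the unique shortest $x$--$y$ path, irrespective of the edge-weight function $c(\cdot)$. Hence the only candidate solution is the vertex set of \PP, and the whole problem collapses to a single feasibility check rather than any search over competing shortest paths.

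Concretely, I would run a depth-first (or breadth-first) search from $x$ to locate $y$ and recover the unique $x$--$y$ path \PP in $\OO(n)$ time. I would then compute $w(\PP)=\sum_{u\in\PP} w(u)$ and $\alpha(\PP)=\sum_{u\in\PP}\alpha(u)$ in a single pass over the path, and output \yes if and only if $w(\PP)\le s$ and $\alpha(\PP)\ge d$; otherwise output \no. Correctness is immediate: since \PP is the only $x$--$y$ path, it is in particular the only shortest $x$--$y$ path, so a feasible shortest-path solution exists exactly when \PP itself satisfies both the size and value constraints.

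The main (and essentially only) obstacle is arguing that no other shortest path needs to be considered, and this is handled entirely by the uniqueness of paths in a tree; there is no dynamic programming, enumeration, or dominance bookkeeping required, in contrast to the general-graph algorithm of \Cref{thm:shortest-algo}. Since every step runs in time linear in the size of the tree, the overall procedure is polynomial, establishing that \shortestpathknapsack is in \Pb for trees.
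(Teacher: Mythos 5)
Your proof is correct and matches the paper's reasoning exactly: the paper justifies \Cref{obs:shortest-tree-poly} with the same one-line observation that only one path exists between any two vertices in a tree, which you have simply spelled out with the linear-time search and the feasibility check.
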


Using the technique in \Cref{thm-fptas}, we use \Cref{thm:shortest-algo} as a black-box to obtain the following approximation algorithm. We again omit its proof due to its similarity with \Cref{thm-fptas}.

\begin{theorem}\label{thm-fptas-shortest-path}
There is a $\text{poly}(n,1/\eps)$ time, $(1-\eps)$ factor approximation algorithm for \shortestpathknapsack for optimizing the value of the solution.
\end{theorem}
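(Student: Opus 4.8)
The plan is to mimic the rounding-based FPTAS of \Cref{thm-fptas}, this time invoking the pseudo-polynomial algorithm of \Cref{thm:shortest-algo} as the black box instead of the treewidth algorithm. Given an instance $\II=(\GG,(w(u))_{u\in\VV},(\alpha(u))_{u\in\VV},s,x,y)$ where we optimize the value, I would scale the vertex values as $\alpha^\pr(u)=\left\lfloor \frac{n\alpha(u)}{\eps\alpha_{\text{max}}}\right\rfloor$ for a suitable normalizer $\alpha_{\text{max}}$, while leaving the edge costs $c$ and the weights $w$ untouched, and then run \Cref{thm:shortest-algo} on the scaled instance $\II^\pr$ to read off from $D_y$ a weight-$\le s$ shortest $x$--$y$ path $\WW^\pr$ of maximum scaled value, which I output. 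The point that makes the black box applicable is that rounding touches only $\alpha$: the edge costs defining ``shortest'' and the weights defining feasibility ($w(\cdot)\le s$) are identical in $\II$ and $\II^\pr$, so the feasible region --- the set of weight-$\le s$ shortest $x$--$y$ paths --- is the same for both instances. Hence $\WW^\pr$ is feasible for $\II$, and if $\WW$ is an optimal solution of $\II$ then $\WW$ is feasible for $\II^\pr$, so exact optimality of $\WW^\pr$ in $\II^\pr$ gives $\sum_{u\in\WW^\pr}\alpha^\pr(u)\ge\sum_{u\in\WW}\alpha^\pr(u)$. From here the identical chain of inequalities used in \Cref{thm-fptas} yields $\alpha(\WW^\pr)\ge(1-\eps)\OPT(\II)$.

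For the running time I would observe that after scaling the total value is bounded by $\sum_{u\in\VV}\alpha^\pr(u)\le \frac{n}{\eps\alpha_{\text{max}}}\sum_{u\in\VV}\alpha(u)\le \frac{n^2}{\eps}$, provided $\alpha(\VV)\le n\,\alpha_{\text{max}}$. Once this holds, the factor $\min\{s^2,(\alpha^\pr(\VV))^2\}$ in the running time of \Cref{thm:shortest-algo} is $\text{poly}(n,1/\eps)$, and multiplying by the $\OO(m+n\log n)$ factor keeps the whole algorithm polynomial in $n$ and $1/\eps$.

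The one place where the argument is genuinely more delicate than in \Cref{thm-fptas}, and hence the main obstacle, is simultaneously justifying $\alpha_{\text{max}}\le\OPT(\II)$ (needed for the final inequality) and $\alpha(\VV)\le n\,\alpha_{\text{max}}$ (needed for the running time). In the connected-subgraph setting a single vertex of weight at most $s$ is itself a feasible solution, so these facts were immediate; here no singleton is feasible, since every solution must be an $x$--$y$ path, and a high-value vertex lying on no feasible shortest path could make $\alpha_{\text{max}}$ exceed $\OPT(\II)$. I would resolve this with a polynomial preprocessing step that discards every vertex $u$ lying on no weight-$\le s$ shortest $x$--$y$ path: membership of $u$ on some shortest path is tested via $\delta(x,u)+\delta(u,y)=\delta(x,y)$ using two Dijkstra runs, and the least weight of a shortest path through $u$ is obtained by a standard augmentation of Dijkstra that propagates the minimum total weight among shortest prefixes from $x$ and suffixes to $y$. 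After this pruning every surviving vertex $u$ lies on a feasible shortest path of value at least $\alpha(u)$, so $\alpha(u)\le\OPT(\II)$ for all $u$; setting $\alpha_{\text{max}}=\max_u\alpha(u)$ then delivers both $\alpha_{\text{max}}\le\OPT(\II)$ and $\alpha(\VV)\le n\,\alpha_{\text{max}}$, exactly the two bounds the preceding paragraphs require.
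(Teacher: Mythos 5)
Your proposal is correct and takes essentially the same route as the paper, whose proof is omitted precisely because it is the rounding scheme of \Cref{thm-fptas} applied verbatim with the pseudo-polynomial algorithm of \Cref{thm:shortest-algo} as the black box (only the values $\alpha$ are scaled, so the feasible region of weight-at-most-$s$ shortest $x$--$y$ paths is unchanged). Your preprocessing step --- pruning every vertex lying on no weight-at-most-$s$ shortest $x$--$y$ path so that $\alpha_{\text{max}}\le\OPT(\II)$ holds, since the singleton-feasibility argument of \Cref{thm-fptas} is unavailable here --- correctly patches a detail the paper glosses over; just note that your prefix-plus-suffix minimum-weight decomposition at $u$ relies on concatenations of shortest paths being simple, which holds under the strictly positive edge costs that the correctness proof of \Cref{thm:shortest-algo} already assumes.
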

\section{Conclusion}

We study the classical \kp problem with various graph theoretic constraints, namely connectedness, path and shortest path. We show that \pa and \pathknapsack are strongly \NPC whereas \shortestpathknapsack admits a pseudo-polynomial time algorithm. All the three problems admit FPTASes for bounded treewidth graphs; only \shortestpathknapsack admits an FPTAS for arbitrary graphs. It would be interesting to explore if meta-theorems can be proven in this general theme of knapsack on graphs. \longversion{For example, one could study the problem on graph classes with properties such as heredity, etc. }

\bibliographystyle{splncs04}
\bibliography{references}
\shortversion{\newpage

}

\end{document}